\documentclass[a4paper,UKenglish]{lipics-v2018}

\newcommand{\modulo}{\operatorname{mod}{}}
\newtheorem{observation}{Observation}

\nolinenumbers

\title{Improved Space-Time Tradeoffs for $k$SUM}

\author{Isaac Goldstein}{Bar-Ilan University, Ramat Gan, Israel}{goldshi@cs.biu.ac.il}{}{This research is supported by the Adams Foundation of the Israel Academy of Sciences and Humanities.}
\author{Moshe Lewenstein}{Bar-Ilan University,Ramat Gan, Israel}{moshe@cs.biu.ac.il}{}{This work was partially supported by an ISF grant \#1278/16, BSF grant 2010437 and a GIF grant 1147/2011.}
\author{Ely Porat}{Bar-Ilan University,Ramat Gan, Israel}{porately@cs.biu.ac.il}{}{This work was partially supported by an ISF grant \#1278/16.}

\authorrunning{I. Goldstein, M. Lewenstein and E. Porat}
\Copyright{Isaac Goldstein, Moshe Lewenstein and Ely Porat}
\subjclass{F.2 ANALYSIS OF ALGORITHMS AND PROBLEM COMPLEXITY}
\keywords{$k$SUM, space-time tradeoff, self-reduction}

\begin{document}

\maketitle

\thispagestyle{empty}

\begin{abstract}
In the $k$SUM problem we are given an array of numbers $a_1,a_2,...,a_n$ and we are required to determine if there are $k$ different elements in this array such that their sum is 0. This problem is a parameterized version of the well-studied SUBSET-SUM problem, and a special case is the $3$SUM problem that is extensively used for proving conditional hardness. Several works investigated the interplay between time and space in the context of SUBSET-SUM. Recently, improved time-space tradeoffs were proven for $k$SUM using both randomized and deterministic algorithms.

In this paper we obtain an improvement over the best known results for the time-space tradeoff for $k$SUM. A major ingredient in achieving these results is a general self-reduction from $k$SUM to $m$SUM where $m<k$, and several useful observations that enable this reduction and its implications. The main results we prove in this paper include the following:
(i) The best known Las Vegas solution to $k$SUM running in approximately $O(n^{k-\delta\sqrt{2k}})$ time and using $O(n^{\delta})$ space, for $0 \leq \delta \leq 1$. (ii) The best known deterministic solution to $k$SUM running in approximately $O(n^{k-\delta\sqrt{k}})$ time and using $O(n^{\delta})$ space, for $0 \leq \delta \leq 1$. (iii) A space-time tradeoff for solving $k$SUM using $O(n^{\delta})$ space, for $\delta>1$. (iv) An algorithm for $6$SUM running in $O(n^4)$ time using just $O(n^{2/3})$ space. (v) A solution to $3$SUM on random input using $O(n^2)$ time and $O(n^{1/3})$ space, under the assumption of a random read-only access to random bits.

\end{abstract}

\section{Introduction}
In the $k$SUM problem we are given an array of numbers $a_1,a_2,...,a_n$ and we are required to determine if there are $k$ different elements in this array such that their sum equals 0. This is a parameterized version of SUBSET-SUM, one of the first well-studied NP-complete problems, which also can be thought of as a special case of the famous KNAPSACK problem~\cite{Karp72}. A special case of $k$SUM is the $3$SUM problem which is extensively used to prove conditional lower bounds for many problems, including: string problems~\cite{AWW14,ACLL14,GKLP16,KPP16}, dynamic problems~\cite{AW14,Patrascu10}, computational geometry problems~\cite{BHP99,GO95}, graph problems~\cite{AL13,AWY15,KPP16} etc.

The $k$SUM problem can be trivially solved in $O(n^k)$ time using $\tilde{O}(1)$ space (for constant $k$), or in $O(n^{\lceil k/2 \rceil})$ time using $O(n^{\lceil k/2 \rceil})$ space. It is known that there is no solution to $k$SUM with $n^{o(k)}$ running time, unless the Exponential Time Hypothesis is false~\cite{PW10}. However, a central goal is to find the best tradeoff between time and space for $k$SUM. Specifically, it is interesting to have a full understanding of questions like: What is the best running time we can achieve by allowing at most \emph{linear} space? How can the running time be improved by using $O(n^2)$, $O(n^3)$ or $O(n^{10})$ space? Can we get any improvement over $O(n^k)$ running time for almost constant space or use less space for $O(n^{\lceil k/2 \rceil})$ time solution? What is the best time-space tradeoff for interesting special cases like $3$SUM?
Questions of this type guided a line of research work and motivate our paper.

One of the first works on the time-space tradeoff of $k$SUM and SUBSET-SUM is by Shamir and Schroeppel~\cite{SS79}. They showed a simple reduction from SUBSET-SUM to $k$SUM. Moreover, they presented a deterministic solution to $4$SUM running in $O(n^2)$ time using $O(n)$ space. They used these solution and reduction to present an $O^{*}(2^{n/2})$ time and $O^{*}(2^{n/4})$ space algorithm for SUBSET-SUM. Furthermore, they demonstrate a space-time tradeoff curve for SUBSET-SUM by a generalized algorithm.
More recently, a line of research work improved the space-time tradeoff of Shamir and Schroeppel by using randomization. This includes works by  Howgrave-Graham and Joux~\cite{HGJ10}, Becker et al.~\cite{BCJ11} and Dinur et al.~\cite{DDKS12} on random instances of SUBSET-SUM, and a matching tradeoff curve for worst-case instances of SUBSET-SUM by Austrin et al.~\cite{AKKM13}.

Wang~\cite{Wang14} used randomized techniques to improve the space-time tradeoff curve for $k$SUM. Specifically, he presented a Las Vegas randomized algorithm for $3$SUM running in $\tilde{O}(n^2)$ time using just $\tilde{O}(\sqrt{n})$ space. Moreover, for general $k$ he demonstrated a Monte Carlo algorithm for $k$SUM that uses $O(n^{\delta})$ space using approximately $O(n^{k-\delta\sqrt{2k}})$ time, for $0 \leq \delta \leq 1$. Lincoln et al.~\cite{LWWW16} achieved $O(n^2)$ time and $\tilde{O}(\sqrt{n})$ space \emph{deterministic} solution for $3$SUM. For general $k$SUM ($k \geq 4$), they obtained a deterministic algorithm running in $O(n^{k-3+4/(k-3)})$ time using linear space and $O(n^{k-2+2/k})$ time using $O(\sqrt{n})$ space.

Very recently, Bansal et al.~\cite{BGNV17} presented a randomized solution to SUBSET-SUM running in $O^{*}(2^{0.86n})$ time and using just polynomial space, under the assumption of a random read-only access to exponentially many random bits. This is based on an algorithm that determines whether two given lists of length $n$ with integers bounded by a polynomial in $n$ share a common value. This problem is closely related to $2$SUM and they proved it can be solved using $O(\log{n})$ space in significantly less than $O(n^2)$ time if no value occurs too often in the same list (under the assumption of a random read-only access to random bits). They also used this algorithm to obtain an improved solution for $k$SUM on random input.

Finally, it is worth mentioning that recent works by Goldstein et al.~\cite{GKLP17,GLP17} consider the space-time tradeoff of data structures variants of $3$SUM and other related problems.

\subsection{Our Results}
In this paper we improve the best known bounds for solving $k$SUM in both (Las Vegas) randomized and deterministic settings. A central component in our results is a general \emph{self-reduction} from $k$SUM to $m$SUM for $m<k$:

\begin{theorem}\label{thm:self_reduction}
There is a self-reduction from one instance of $k$SUM with $n$ integers in each array to $O(n^{(k/m-1)(m-\delta)})$ instances of $m$SUM (reporting) with $n$ integers in each array and $O(n^{(k/m-1)(m-\delta)})$ instances of $\frac{k}{m}$SUM with $n^{\delta}$ integers in each array, for any integer $m<k$ and $0<\delta \leq m$.
\end{theorem}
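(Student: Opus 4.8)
The natural approach is to split each of the $k$ coordinates of a candidate solution into $k/m$ groups of $m$ coordinates each, and to guess "partial sums'' for $k/m-1$ of the groups (the last one being determined), then recurse. Concretely, think of the $k$SUM instance as asking for $x_1+\dots+x_k=0$ with $x_i$ drawn from the $i$th array (the single-array version follows by the standard trick of tagging elements with their index). Partition the $k$ arrays into $k/m$ blocks $B_1,\dots,B_{k/m}$ of $m$ arrays each. A solution restricted to block $B_j$ has some sum $s_j$, and $\sum_j s_j = 0$. So I would like to enumerate the tuples $(s_1,\dots,s_{k/m-1})$ and for each one solve an $m$SUM-reporting instance on $B_j$ asking for all sub-$m$-tuples summing to $s_j$; collecting one representative sum per block then leaves a $\frac{k}{m}$SUM instance (one "super-element'' per block, whose value is the block-sum) on arrays of the appropriate size. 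The key quantitative point is that we cannot afford to enumerate \emph{all} possible values of the $s_j$ — there are too many. Instead I would use a \emph{hashing / bucketing} step: pick a (almost-linear, nearly-affine) hash $h$ with range of size roughly $n^{m-\delta}$, require each block-sum $s_j$ to land in a prescribed bucket $h(s_j)=r_j$ with $\sum_j r_j \equiv 0$, and enumerate the bucket-vectors $(r_1,\dots,r_{k/m-1})$ — there are $O\!\big(n^{(k/m-1)(m-\delta)}\big)$ of them. For each fixed bucket vector, the $m$SUM-reporting subproblem on block $B_j$ returns only the $m$-tuples whose sum hashes to $r_j$; heuristically (and provably, with the right hash family and a few observations of the kind the paper alludes to) each such block has about $n^m / n^{m-\delta} = n^\delta$ surviving tuples, which is exactly the array size fed into the residual $\frac{k}{m}$SUM instance.

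So the steps, in order, are: (1) reduce single-array $k$SUM to the $k$-array version and set up the block partition; (2) choose the hash family and argue that for a random $h$ the number of $m$-tuples in any block mapping to a fixed value is $O(n^\delta)$ with good probability (for the Las Vegas bound) or show a deterministic construction (for the derandomized version) — this likely leans on the "useful observations'' the introduction promises, e.g. that one may first cluster or sort and discard over-popular residues; (3) for each of the $O(n^{(k/m-1)(m-\delta)})$ admissible bucket-vectors $(r_1,\dots,r_{k/m-1})$ (the last coordinate forced by $\sum r_j = h(0)$ up to the affine error term), invoke the $m$SUM-reporting subroutine on each of the $k/m$ blocks to produce the lists of surviving block-sums; (4) feed the $k/m$ lists, each of size $O(n^\delta)$, into a single $\frac{k}{m}$SUM instance and report accordingly; (5) argue completeness (a genuine $k$SUM solution induces block-sums whose hash values sum correctly, so it is found in the corresponding branch) and soundness (any tuple reported really sums to $0$), and tally the instance counts to match the statement. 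Note the bookkeeping subtlety that $h$ is only \emph{nearly} affine, so "$\sum_j h(s_j)=h(0)$'' holds only up to a bounded additive slack; this is absorbed by letting $r_{k/m}$ range over an $O(1)$-size set of values rather than a single value, which does not affect the exponent.

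The main obstacle I expect is step (2): controlling the \emph{load} of each hash bucket within each block so that the residual arrays genuinely have size $O(n^\delta)$ rather than merely $O(n^\delta)$ \emph{in expectation}. For the Las Vegas algorithm one wants a bound that holds with high probability (or can be verified and re-drawn), and for the deterministic algorithm one needs an explicit almost-balanced hash/partition, which is typically where the real work lies — presumably handled by first identifying and separately treating the $O(n^{m-\delta})$ "heavy'' sums that occur too often, an idea that the paper signals by emphasizing "several useful observations that enable this reduction.'' The remaining parts — the combinatorial counting of branches, completeness, and soundness — are routine once the hashing is in place.
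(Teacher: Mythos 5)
Your proposal matches the paper's proof essentially step for step: partition the $k$ arrays into $k/m$ groups of $m$, apply an almost-linear almost-balanced hash $h$ with range $[n^{m-\delta}]$, enumerate the $O(n^{(k/m-1)(m-\delta)})$ target vectors $(t_1,\dots,t_{k/m-1})$ with the last coordinate forced to $h(t)-\sum_i t_i$, solve $m$SUM reporting on each group, and feed the $O(n^{\delta})$ surviving sums per group into a $\frac{k}{m}$SUM instance that is then verified against the original values. The one place where your guess diverges slightly is the load-control issue you correctly flag in step (2): the paper resolves it not by segregating over-popular residues but by observing that $m$-tuples with equal original sums are interchangeable for the purposes of the reduction, so duplicates are discarded on the fly (keeping one representative per distinct sum), with the remaining imbalance of the almost-balanced hash repaired by a separate rebalancing procedure in the appendix.
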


Moreover, we present several crucial observations and techniques that play central role in this reduction and other results of this paper.

For general $k$SUM we obtain the following results:

Using our self-reduction scheme and the ideas by Lincoln et al.~\cite{LWWW16}, we obtain a deterministic solution to $k$SUM that significantly improves over the deterministic algorithm by Lincoln at al.~\cite{LWWW16} that runs in $O(n^{k-3+4/(k-3)})$ time using linear space and $O(n^{k-2+2/k})$ time using $O(\sqrt{n})$ space:

\begin{theorem}
For $k \geq 2$, $k$SUM can be solved by a deterministic algorithm that runs in $O(n^{k-\delta g(k)})$ time using $O(n^{\delta})$ space, for $0 \leq \delta \leq 1$ and $g(k)\ge \sqrt{k} - 2$.
\end{theorem}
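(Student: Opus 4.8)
The plan is to combine the self-reduction of Theorem~\ref{thm:self_reduction} with the known low-space algorithms for $m$SUM and $\frac{k}{m}$SUM, and then optimize the choice of $m$. First I would instantiate Theorem~\ref{thm:self_reduction} with a parameter $m \mid k$ (choosing $m$ a divisor of $k$ so that $\frac{k}{m}$ is an integer; the general case follows by padding), obtaining $O(n^{(k/m-1)(m-\delta)})$ instances of $m$SUM on arrays of size $n$ together with the same number of $\frac{k}{m}$SUM instances on arrays of size $n^{\delta}$. The $m$SUM (reporting) instances are handled by the deterministic low-space machinery of Lincoln et al.~\cite{LWWW16}: for fixed $m$ this takes roughly $O(n^{m-c})$ time per instance in linear space for an appropriate constant $c$, and one charges the reported solutions to the second stage. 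The $\frac{k}{m}$SUM instances on $n^{\delta}$-size arrays are solved by the trivial meet-in-the-middle (or, since the arrays are already small, essentially for free within the $O(n^{\delta})$ space budget).

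Next I would write the total running time as a function of $m$ and $\delta$: up to polylog factors it is of the form
\[
O\!\left(n^{(k/m-1)(m-\delta)}\cdot T_m(n)\right) + O\!\left(n^{(k/m-1)(m-\delta)}\cdot T_{k/m}(n^{\delta})\right),
\]
where $T_m$ and $T_{k/m}$ are the running times of the subroutines above. Plugging in the bounds from~\cite{LWWW16} and simplifying, the exponent of $n$ becomes something like $k - \delta\bigl(\tfrac{k}{m} - 1 + (\text{savings from the } m\text{SUM solver})\bigr)$ plus lower-order corrections. The key step is then to choose $m$ to balance the two contributions to the exponent: the factor $\frac{k}{m}-1$ shrinks as $m$ grows, while the per-instance cost of the $m$SUM solver grows with $m$, so there is an optimal $m \approx \sqrt{k}$. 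With that choice the coefficient of $\delta$ in the saved exponent is $\Omega(\sqrt{k})$; a careful accounting of the additive constants coming from the Lincoln et al.\ algorithm (which contribute $O(1)$ terms like the "$-2$") yields the claimed bound $g(k) \ge \sqrt{k} - 2$.

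I would then verify the edge cases: for $\delta = 0$ the bound degenerates to $O(n^k)$, matching the trivial algorithm, and for $\delta = 1$ it gives the linear-space running time, which should recover (and improve on) the $O(n^{k-3+4/(k-3)})$ bound of~\cite{LWWW16} for the relevant range of $k$. I would also check small $k$ (e.g.\ $k = 2, 3$) separately, since there the $\sqrt{k}-2$ bound is weak or vacuous and the base-case algorithms of~\cite{LWWW16} apply directly.

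\medskip

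The main obstacle I expect is the bookkeeping in the reduction's second stage: the $m$SUM instances are \emph{reporting} instances, so the number of witnesses they produce must be controlled (or re-batched) so that feeding them into the $\frac{k}{m}$SUM subroutine does not blow up the time or space budget beyond $O(n^{\delta})$. Making the two stages compose cleanly — ensuring the $n^{\delta}$-size arrays fed to the $\frac{k}{m}$SUM calls are exactly the reported partial sums, without materializing too many of them at once — is the delicate part, and it is presumably where the "crucial observations and techniques" advertised after Theorem~\ref{thm:self_reduction} do the real work. The optimization over $m$ itself is then a routine (if slightly fiddly) one-variable calculation.
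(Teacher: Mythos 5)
Your plan has a genuine gap at its first step: you cannot simply ``instantiate Theorem~\ref{thm:self_reduction}'' and make the result deterministic by plugging in deterministic subroutines. The reduction of Theorem~\ref{thm:self_reduction} picks a random almost-linear almost-balanced hash function, and the guarantee that each bucket (each list $L_i$ of $m$-tuples hashing to a fixed $t_i$) contains only $O(n^{\delta})$ entries holds only in expectation over that random choice. The randomness lives in the decomposition itself, not in the $m$SUM or $\frac{k}{m}$SUM solvers, so your algorithm would still be randomized. The paper's proof instead \emph{derandomizes the bucketing}: for each group $G_i$ it considers the array $SUMS_{G_i}$ of all $n^m$ group-sums, replaces hash buckets by \emph{sorted parts} (contiguous blocks of size $n^{\delta}$ of the sorted $SUMS_{G_i}$), and invokes the domination lemma of Lincoln et al.\ (with $g=n^{m-\delta}$ blocks per group and $k/m$ playing the role of $k$) to show that only $O(n^{(k/m-1)(m-\delta)})$ combinations of blocks can possibly contain a solution --- the same count the hashing argument gives in expectation. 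Each sorted part is extracted with Lincoln et al.'s $O(n)$-time, $O(S)$-space data structure applied to $SUMS_{G_i}$, costing $O(n^m)$ time and $O(n^{\delta})$ space.

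Your cost accounting also differs from what actually closes the bound. There is no ``$O(n^{m-c})$ per-instance $m$SUM solver'' in the first stage: the per-combination cost is a flat $O(n^m)$ for producing the sorted parts, and the inner $\frac{k}{m}$SUM instances on arrays of size $n^{\delta}$ are solved by the \emph{trivial} $O(n^{\delta k/m})$-time algorithm (meet-in-the-middle would need about $n^{\delta k/(2m)}$ space, far exceeding $O(n^{\delta})$ once $k/m>2$). The recurrence is $T(k,n,n^{\delta}) = n^{(k/m-1)(m-\delta)}\bigl(n^m+T(k/m,n^{\delta},n^{\delta})\bigr)$, and with $m=\sqrt{k}$ the two inner terms become $n^{\sqrt{k}}$ and $n^{\delta\sqrt{k}}$, yielding $O(n^{k-\sqrt{k}\delta+\delta})$ for $\delta\le 1$. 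Your instinct to set $m\approx\sqrt{k}$ is correct, and your worry about re-batching reported witnesses is a real concern in the randomized reductions, but the delicate point here is the derandomization via sorted parts and the domination lemma, which your proposal does not supply.
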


By allowing randomization we have the following result:

\begin{theorem}\label{thm:ksum_las_vegas}
For $k \geq 2$, $k$SUM can be solved by a Las Vegas randomized algorithm that runs in $O(n^{k-\delta f(k)})$ time using $O(n^{\delta})$ space, for $0 \leq \delta \leq 1$ and $f(k)\ge \sqrt{2k}-2$.
\end{theorem}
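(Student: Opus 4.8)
The plan is to instantiate the self-reduction of Theorem~\ref{thm:self_reduction} with a well-chosen block length $m$, dispatch the two families of sub-instances it produces to space-efficient routines, and optimize over $m$. Fix an integer $m$ (it will turn out that $m$ of order $\sqrt{k}$, more precisely $m\approx\sqrt{2\delta k}$, is optimal; the cases of very small $\delta$ or $m\nmid k$ are handled separately by the usual padding argument). Theorem~\ref{thm:self_reduction} then turns the input $k$SUM instance into $q:=O(n^{(k/m-1)(m-\delta)})$ reporting instances of $m$SUM on arrays of size $n$ --- each required to output only the tuples whose sum hashes into a prescribed bucket, of which there are $O(n^{\delta})$ in expectation --- together with $q$ instances of $(k/m)$SUM on arrays of size $n^{\delta}$. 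Because a self-reduction is exact, all correctness and randomization issues are confined to how we solve these sub-instances.

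The $(k/m)$SUM instances have array size $n^{\delta}$ equal to the space budget, so each is a ``linear-space'' instance of $j$SUM with $j=k/m$; applying the known space-efficient $j$SUM algorithms (e.g.\ recursing with the present scheme, bottoming out at $2$SUM/$3$SUM which run in $\tilde{O}(n^{2\delta})$ time and $O(n^{\delta})$ space) solves one such instance in roughly $n^{\delta(k/m-1)}$ time, and $q\cdot n^{\delta(k/m-1)}=n^{k-m}$, a saving of $m$ in the exponent over the trivial bound. The $m$SUM reporting instances are the crux: we must enumerate the $\tilde{O}(n^{\delta})$ tuples landing in a given bucket in roughly $\tilde{O}(n^{\lceil m/2\rceil})$ time using only $O(n^{\delta})$ space. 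Following Wang's idea, we split the $m$ arrays in half, enumerate the $n^{\lceil m/2\rceil}$ partial sums of one half, and for each of them look up the completing partial sums of the other half; the lookup is made to fit in $O(n^{\delta})$ space by bucketing the second half with an almost-linear random hash function and recursing into the relevant bucket. The hash is made Las Vegas rather than Monte Carlo by detecting over-full buckets and redrawing a fresh hash function; a balls-into-bins estimate shows the expected number of redraws is $O(1)$, so correctness is unconditional and only the expected running time is randomized.

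Summing, the running time is $q\cdot\bigl(\tilde{O}(n^{\lceil m/2\rceil})+n^{\delta(k/m-1)}\bigr)$. The second term equals $n^{k-m}$; the first equals, up to rounding in the ceiling, $n^{k-(\delta k/m+m/2-\delta)}$, whose exponent is minimized at $m=\Theta(\sqrt{\delta k})$ and there attains a saving of $\sqrt{2\delta k}-\delta$. Since $\min\{m,\;\delta k/m+m/2-\delta\}=\sqrt{2\delta k}-\delta\ge\delta(\sqrt{2k}-2)$ for all $0\le\delta\le1$, taking $m$ to be the nearest integer to $\sqrt{2\delta k}$ gives total time $O(n^{k-\delta f(k)})$ with $f(k)\ge\sqrt{2k}-2$, while the space stays $O(n^{\delta})$ because every sub-instance respects the budget.

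The main obstacle is the space-efficient $m$SUM reporting routine: making the meet-in-the-middle enumeration run in about $n^{\lceil m/2\rceil}$ time with only $n^{\delta}$ working space, proving that the random hash keeps all buckets within a constant factor of their expected size with good probability, and bounding the Las Vegas restart overhead. The remaining ingredients --- the self-reduction itself, the recursion for the small $(k/m)$SUM instances, and the optimization over $m$ --- are essentially routine bookkeeping; the gap of $2$ between $f(k)$ here and Wang's $\sqrt{2k}$ is precisely the slack introduced by insisting on the Las Vegas (always-correct) guarantee.
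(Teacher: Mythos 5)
There is a genuine gap, and it sits exactly where you flag "the main obstacle": the subroutine that solves $m$SUM reporting in $\tilde{O}(n^{\lceil m/2\rceil})$ time using only $O(n^{\delta})$ space does not exist, and assuming it makes the rest of the argument circular. Meet-in-the-middle in $n^{\lceil m/2\rceil}$ time inherently wants $n^{\lfloor m/2\rfloor}$ space to store one half's partial sums; your fix --- bucket the second half by a random almost-linear hash and "recurse into the relevant bucket" --- does not recover the time bound, because there are $n^{\lfloor m/2\rfloor-\delta}$ buckets and materializing each one requires re-enumerating the second half's partial sums, so the total cost is $n^{\lfloor m/2\rfloor-\delta}\cdot(n^{\lfloor m/2\rfloor}+n^{\lceil m/2\rceil})=n^{m-\delta}$, i.e.\ no better than brute force over the reduction. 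Indeed, if your subroutine existed it would already beat results this paper works hard to obtain (e.g.\ $4$SUM in $O(n^2)$ time with $O(n^{\delta})$ space for $\delta<1$, or $6$SUM in $O(n^3)$ time with $O(n)$ space, both open; the paper only gets $O(n^3)$ time for $6$SUM with $O(n^2)$ space), and your intermediate saving of $\sqrt{2\delta k}-\delta$ is far stronger than the theorem for small $\delta$ --- a sign the assumed primitive is too powerful. The arithmetic for the $(k/m)$SUM side ($q\cdot n^{\delta(k/m-1)}=n^{k-m}$) and the final inequality $\sqrt{2\delta k}-\delta\ge\delta(\sqrt{2k}-2)$ are fine, but they rest on the unsupported term.

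The paper's actual route is structurally different and worth internalizing: the $\sqrt{2k}$ does not come from optimizing the block length $m$ in the self-reduction (that optimization, with $m=\sqrt{k}$, only yields $\sqrt{k}$-type savings and is used for the deterministic result). Instead, the paper first builds a \emph{linear-space} Las Vegas algorithm by modifying Wang's triangular recursion --- $k=T_j+1$ with a left group of $j$ arrays whose $j$-tuples hitting a hash value $v_l$ are stored (after the "removing duplicate sums" step, so the table stays at $O(n)$ entries) and a right group of $T_{j-1}+1$ arrays handled recursively with \emph{unbounded} output via "paused reporting"; the "small space of fixed group" observation shows it suffices that one side is balanced. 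This gives $(T_j+1)$SUM in $O(n^{T_{j-1}+1})$ time and linear space, i.e.\ a saving of $j\approx\sqrt{2k}$ at $\delta=1$. Only then is the self-reduction applied, with $m=1$, to scale that saving down to $\delta\sqrt{2k}$ for sublinear space. If you want to repair your write-up along the paper's lines, the work you need to supply is the Las-Vegas-ification of Wang's linear-space algorithm (removing his Monte Carlo solution-isolation step), not a space-efficient meet-in-the-middle.
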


Our Las Vegas algorithm has the same running time and space as Wang's~\cite{Wang14} Monte Carlo algorithm. The idea is to modify his algorithm using the observations and techniques from our self-reduction scheme.

We also consider solving $k$SUM using $O(n^{\delta})$ space for $\delta>1$. Using our self-reduction technique and the algorithm from Theorem~\ref{thm:ksum_las_vegas}, we prove the following:

\begin{theorem}
For $k \geq 2$, $k$SUM can be solved by a Las Vegas algorithm that runs in $O(n^{k-\sqrt{\delta} f(k)})$ time using $O(n^{\delta})$ space, for $\frac{k}{4} \geq \delta > 1$ and $f(k) \ge \sqrt{2k}-2$.
\end{theorem}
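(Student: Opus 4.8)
The plan is to combine the self-reduction of Theorem~\ref{thm:self_reduction} with the Las Vegas algorithm of Theorem~\ref{thm:ksum_las_vegas}, using the extra space budget $n^\delta$ with $\delta>1$ in two ways simultaneously: partly to shorten the ``outer'' enumeration produced by the reduction, and partly as working space for the recursive subproblems. First I would apply Theorem~\ref{thm:self_reduction} with a well-chosen parameter $m$ and with its internal parameter (call it $\delta'$, the array size of the $\frac{k}{m}$SUM instances) set to roughly $\delta'=\delta$, so that the reduction produces $O(n^{(k/m-1)(m-\delta)})$ instances of $m$SUM-reporting on arrays of size $n$ together with the same number of instances of $\frac{k}{m}$SUM on arrays of size $n^\delta$. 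Each $m$SUM-reporting instance can be handled deterministically in roughly $O(n^{\lceil m/2\rceil})$ time (or via the tradeoff algorithm) while storing only the current partial sums, and each small $\frac{k}{m}$SUM instance, living on $n^\delta$ elements, is solved by the Las Vegas algorithm of Theorem~\ref{thm:ksum_las_vegas} using $O((n^\delta)^{\delta''})$ space for some $\delta''\le 1$; choosing $\delta''$ so that $(n^\delta)^{\delta''}=n^{\delta}$ — i.e. the full allowed space — would over-spend, so in fact one wants $\delta''\le 1$ and the space of the subproblem is at most $n^\delta$, which is consistent. The running time then reads $n^{(k/m-1)(m-\delta)}\cdot\big(n^{\lceil m/2\rceil}+(n^\delta)^{\,k/m-\delta'' f(k/m)}\big)$, and the game is to optimize over $m$ and $\delta''$.

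The key steps, in order, are: (1) instantiate Theorem~\ref{thm:self_reduction}, tracking both the number of instances and the two instance sizes; (2) plug in the cost of solving $m$SUM-reporting in small space and the cost from Theorem~\ref{thm:ksum_las_vegas} for the $\frac{k}{m}$SUM subproblems on $n^\delta$-size arrays; (3) balance the two terms in the bracket against the multiplicative factor $n^{(k/m-1)(m-\delta)}$; (4) choose $m\approx\sqrt{k/\delta}\cdot(\text{const})$ or similar — the appearance of $\sqrt{\delta}$ in the target exponent $n^{k-\sqrt{\delta}f(k)}$ strongly suggests that, after substituting $\delta'\approx\delta$ and optimizing, the saving in the exponent scales like $\sqrt{\delta}$ rather than $\delta$ because the small instances have their parameter effectively divided by $m$ while their array size is multiplied by $\delta$, and the two-layer optimization produces a geometric-mean type bound; (5) verify the constraint $\frac{k}{4}\ge\delta>1$ is exactly what is needed for the chosen $m$ to be a valid integer $\ge 2$ with $k/m$ also a valid $k$SUM parameter, and for the $m$SUM-reporting term not to dominate. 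Throughout I would carry $f(k)\ge\sqrt{2k}-2$ symbolically and only at the end confirm the claimed exponent.

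I expect the main obstacle to be the bookkeeping in the two-level optimization: unlike the $\delta\le 1$ case, here the space budget must be split between the outer reporting phase and the inner recursive calls, and the reporting instances of $m$SUM are themselves solved with a nontrivial time-space tradeoff, so one must be careful that storing the output of (or streaming through) the reporting phase does not blow the space budget — this is presumably where the observations accompanying the self-reduction are used to avoid materializing large intermediate lists. A secondary subtlety is ensuring the Las Vegas guarantee is preserved under composition: the outer enumeration is deterministic, each inner call is Las Vegas, and one must argue the expected running time adds up correctly and that a ``no'' answer is certified only when all subproblems certifiably report ``no''. Finally, matching the precise constant in $\sqrt{\delta}f(k)$ (as opposed to something like $\sqrt{\delta}f(k)-O(1)$) will require choosing the free parameters to make the two balanced terms meet the multiplicative factor exactly, which I would do by setting derivatives to zero in the exponent as a function of the continuous relaxation of $m$ and then rounding, absorbing the rounding loss into the inequality $f(k)\ge\sqrt{2k}-2$.
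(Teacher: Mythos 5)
Your overall skeleton is the same as the paper's: instantiate the self-reduction of Theorem~\ref{thm:self_reduction} with output-array size $n^{\delta}$, obtaining the recursion $T(k,n,n^{\delta}) = n^{(k/m-1)(m-\delta)}\bigl(T(m,n,n^{\delta})+T(k/m,n^{\delta},n^{\delta})\bigr)$, and solve the inner $\frac{k}{m}$SUM instances on arrays of $n^{\delta}$ elements with the linear-space Las Vegas algorithm (i.e.\ your $\delta''=1$, which is not over-spending: $(n^{\delta})^{1}=n^{\delta}$ is exactly the budget). Your geometric-mean intuition for where $\sqrt{\delta}$ comes from is also essentially right. But the one step that actually carries the proof --- the choice of $m$ --- is wrong in your proposal. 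You suggest $m\approx\sqrt{k/\delta}$; the correct choice is $m=\lceil\delta\rceil$. Note first that the reduction requires $\delta\le m$ (the hash range is $n^{m-\delta}$ and each group of $m$ arrays has only $n^{m}$ tuples, so it cannot produce lists of $n^{\delta}$ distinct sums if $m<\delta$); for $\delta>k^{1/3}$ one has $\sqrt{k/\delta}<\delta$, so your $m$ is infeasible there, and even where it is feasible it does not yield the claimed exponent. With $m=\lceil\delta\rceil$ everything is immediate: the prefix $n^{(k/\lceil\delta\rceil-1)(\lceil\delta\rceil-\delta)}$ is essentially negligible, the $m$SUM-reporting term costs $n^{\lceil\delta\rceil}$, and the inner term is $(n^{\delta})^{k/\lceil\delta\rceil-\sqrt{2k/\lceil\delta\rceil}+O(1)}$, whose savings term $\delta\sqrt{2k/\lceil\delta\rceil}\approx\sqrt{\delta}\sqrt{2k}=\sqrt{\delta}\,f(k)+O(\sqrt{\delta})$ gives exactly the claimed bound. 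The constraint $\delta\le k/4$ is what keeps $k/\lceil\delta\rceil$ a meaningful $k$SUM parameter for the inner call, not a condition on $\sqrt{k/\delta}$.

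Two smaller points. If you run the one-parameter optimization you gesture at (exponent $k-m+O(\delta)-\delta\sqrt{2k/m}$ over $m\ge\lceil\delta\rceil$), the stationary point is a maximum, not a minimum, so ``setting derivatives to zero and rounding'' would lead you astray; the optimum sits at the boundary $m=\lceil\delta\rceil$. And your worry about splitting the space budget between the outer reporting phase and the inner calls is a non-issue in the paper's setup: the lists $L_i$/$B_i$ each hold $O(n^{\delta})$ sums after removing duplicates, and both phases independently stay within $O(n^{\delta})$ space, with the paused-reporting mechanism covering the groups whose raw output exceeds that.
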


Our self-reduction technique can also be applied directly to obtain improvements on the space-time tradeoff for special cases of $k$SUM. Especially interesting is the case of $6$SUM which can be viewed as a combination of the "easy" 4SUM and the "hard" $3$SUM. We obtain randomized algorithms solving $6$SUM in $O(n^3)$ time using $O(n^2)$ space and in $O(n^4)$ time using just $O(n^{2/3})$ space (and not $O(n)$ as known by previous methods~\cite{Wang14}).

Finally, combining our techniques with the techniques by Bansal et al.~\cite{BGNV17} we obtain improved space-time tradeoffs for some special cases of $k$SUM on random input, under the assumption of a random read-only access to random bits. One notable result of this flavour is a solution to $3$SUM on random input that runs in $O(n^2)$ time and $O(n^{1/3})$ space, instead of the $O(n^{1/2})$ space solutions known so far~\cite{LWWW16,Wang14}.

\section{Preliminaries}\label{sec:preliminaries}
In the basic definition of $k$SUM the input contains just one array. However, in a variant of this problem, which is commonly used, we are given $k$ arrays of $n$ numbers and we are required to determine if there are $k$ elements, one from each array, such that their sum equals 0. It is easy to verify that this variant is equivalent to $k$SUM in terms of time and space complexity. We also note that the choice of 0 is not significant, as it can be easily shown that the problem is equivalent in terms of time and space complexity even if we put any other constant $t$, called the \emph{target number}, instead of 0. Throughout this paper we consider $k$SUM with $k$ arrays and a target value $t$. We also consider the \emph{reporting} version of $k$SUM in which we need to report \emph{all} subsets of $k$ elements that sum up to 0 or some other constant $t$.

All the randomized algorithms in this paper solve $k$SUM on input arrays that contain \emph{integer} numbers. The target number $t$ is also assumed to be an integer. This assumption was also used in previous papers considering the space-time tradeoff for $k$SUM (see~\cite{Wang14}). The deterministic solution we present is the only one that works even for $k$SUM on real numbers.

Let $\mathcal{H}$ be a family of hash functions from $[u]$ to $[m]$ ($[u]$ is some unbounded universe). $\mathcal{H}$ is called {\em linear} if for any $h\in\mathcal{H}$ and any $x_1,x_2 \in [u]$, we have $h(x_1) + h(x_2) \equiv h(x_1+x_2) \; (\modulo m)$.
$\mathcal{H}$ is called {\em almost-linear} if for any $h\in\mathcal{H}$ and any $x_1,x_2 \in [u]$, we have
either $h(x_1) + h(x_2) \equiv h(x_1+x_2) +c_h \; (\modulo m)$, or $h(x_1) + h(x_2) \equiv h(x_1+x_2) + c_h +1 \; (\modulo m)$, where $c_h$ is an integer that depends only on the choice of $h$.
Throughout this paper we will assume that $h$ is linear as almost linearity will just add a constant factor cost to the running time and a change in the offsets which can be easily handled.
For a function $h:[u] \rightarrow [m]$ and a set $S\subset [u]$ where $|S|=n$, we say that $i\in [m]$ is an overflowed value of $h$ if $|\{x\in S : h(x) = i\}| > 3n/m$.
$\mathcal{H}$ is called {\em almost-balanced} if for a random $h\in \mathcal{H}$ and any set $S\subset [u]$ where $|S|=n$, the expected number of elements from $S$ that are mapped to overflowed values is $O(m)$ (for more details see~\cite{BDP05,Dietzfelbinger96,KPP16,Wang14}).
There are concrete constructions of hash families that are almost-linear and almost-balanced~\cite{KPP16,Wang14}.
In the  Las Vegas algorithms in this paper, we assume, in order for the presentation to be clear, that an almost-balanced hash function can become balanced (which means that there are no overflowed values at all). The full details of how this can be done in our Las Vegas algorithms appear in Appendix~\ref{apx:balanced}.

\section{Self-Reduction From $k$SUM to $m$SUM}\label{sec:self_reduction}

We demonstrate a general efficient reduction from a single instance of $k$SUM to many instances of $m$SUM (reporting) and $\lceil \frac{k}{m} \rceil$SUM for $m<k$:

\setcounter{theorem}{0}

\begin{theorem}\label{thm:self_reduction}
There is a self-reduction from one instance of $k$SUM with $n$ integers in each array to $O(n^{(\lceil k/m \rceil -1)(m-\delta)})$ instances of $m$SUM (reporting) with $n$ integers in each array and $O(n^{(\lceil k/m \rceil -1)(m-\delta)})$ instances of $\lceil \frac{k}{m} \rceil$SUM with $O(n^{\delta})$ integers in each array, for any integer $m<k$ and $0<\delta \leq m$.
\end{theorem}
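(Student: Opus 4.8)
The plan is to partition the $k$ arrays into $\lceil k/m \rceil$ groups, where the first $\lceil k/m \rceil - 1$ groups contain exactly $m$ arrays each and the last group holds the remaining $k - m(\lceil k/m \rceil - 1) \le m$ arrays. Within each group of $m$ arrays we want to precompute, for a suitable bucketing of the target value, the set of partial sums realizable by picking one element from each of the $m$ arrays in that group; these partial sums play the role of the elements fed into a top-level $\lceil k/m \rceil$SUM instance. The difficulty is that the number of such partial sums in a group is $\Theta(n^m)$, which is far too many to store, so the whole construction has to be driven by a hash function that controls both the number of "active" buckets and the number of partial sums landing in a bucket.

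Concretely, first I would pick a linear, almost-balanced hash function $h$ from the family $\mathcal{H}$ into a range $[R]$ with $R = n^{m-\delta}$, and (appealing to the balancing convention fixed in the Preliminaries / Appendix~\ref{apx:balanced}) assume $h$ is actually balanced, so every value of $h$ receives at most $O(n/R)$... more carefully, I would apply $h$ to each of the $k$ arrays, so each array splits into $R$ buckets. A choice of one bucket-index per array is a vector in $[R]^k$; by linearity of $h$, only those index vectors whose coordinatewise sum is congruent to $h(t) \pmod R$ can contribute a solution, and there are $O(R^{k-1}) = O(n^{(k-1)(m-\delta)})$... wait — that is too many. The right move is to fix the bucket choices only inside the first $\lceil k/m\rceil - 1$ groups: for group $i$ ($1 \le i \le \lceil k/m\rceil-1$) we choose a vector of $m$ bucket-indices, one per array in the group, and this determines a single combined bucket-index $b_i \in [R]$ for the group (the sum of the $m$ indices mod $R$, again using linearity). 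The combined index for the last group is then forced, $b_{\lceil k/m\rceil} \equiv h(t) - \sum_{i<\lceil k/m\rceil} b_i \pmod R$. The number of ways to choose the $m$ indices in one of the first $\lceil k/m\rceil-1$ groups is $R^{m-1}$ — no: it is the number of index-vectors, $R^m$, but two vectors with the same coordinatewise sum give the same $b_i$ and hence should be treated together, so what we enumerate is $(m-1)$ free coordinates plus... I would instead simply enumerate, for each of the first $\lceil k/m\rceil - 1$ groups, all $R^m$ choices but observe we may as well fix all but a spanning set; the clean count is: enumerate $R$ choices of combined index $b_i$ per group for $i < \lceil k/m\rceil$, giving $R^{\,\lceil k/m\rceil - 1} = O(n^{(\lceil k/m\rceil - 1)(m-\delta)})$ outer iterations, which is exactly the claimed number of instances.

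For a fixed tuple $(b_1,\dots,b_{\lceil k/m\rceil})$ of combined indices, each group $i$ now defines an $m$SUM-reporting instance: its $m$ input arrays are the original $m$ arrays of that group (all $n$ integers, unbucketed), and we ask it to report every $m$-tuple whose elementwise sum $s$ satisfies $h(s) \equiv b_i \pmod R$ — by linearity this is the same as asking that the elements lie in the buckets summing to $b_i$, and by almost-balancedness the total number of reported tuples, summed over the $R$ values of $b_i$, is $O(n^m)$, so the expected number per group is $O(n^m / R) = O(n^\delta)$. (For the last, smaller group with $m' \le m$ arrays we pad with $m - m'$ extra arrays each containing the single element $0$, turning it into a genuine $m$SUM instance without changing the answer; its output size is $O(n^{m'}/R)$, still $O(n^\delta)$ since $m' \le m$ — here is where I'd double-check the edge bookkeeping.) Collect the reported partial sums of group $i$ into a list $L_i$ of size $O(n^\delta)$. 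Finally feed $L_1,\dots,L_{\lceil k/m\rceil}$ into one $\lceil k/m\rceil$SUM instance with target $t$: a solution to the original $k$SUM exists iff for some outer tuple the corresponding $\lceil k/m\rceil$SUM instance has a solution, because any genuine $k$-term solution, when its $k$ chosen elements are grouped, produces one partial sum per group, and the combined bucket indices of those partial sums necessarily sum to $h(t)$, so the solution is caught in exactly one outer iteration. This yields $O(n^{(\lceil k/m\rceil-1)(m-\delta)})$ instances of $m$SUM-reporting on $n$-element arrays and the same number of $\lceil k/m\rceil$SUM instances on $O(n^\delta)$-element arrays, as claimed.

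The main obstacle, and the place the argument needs genuine care rather than routine bookkeeping, is controlling the list sizes $|L_i|$: almost-balancedness only bounds the \emph{expected} number of elements mapped to overflowed values, and a naive bound on the number of $m$-tuples landing in one combined bucket could be as large as $\Theta(n^m)$ in the worst case. The resolution is the balancing convention imported from Appendix~\ref{apx:balanced} — we may assume $h$ is actually balanced, so that each of the $R$ buckets of each array gets $O(n/R)$ elements, whence each group's combined bucket gets $O((n/R)^m \cdot R^{m-1})$... that is again too crude; the correct statement is that the reported set across all $R$ combined-index values has total size $O(n^m)$ (each $m$-tuple is reported for exactly one value of $b_i$), so picking the outer indices at random — or iterating over all of them and amortizing — gives expected size $O(n^m/R) = O(n^\delta)$ per instance, which is what the downstream theorems consume. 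I would make this averaging rigorous by noting that the sum over all $R^{\lceil k/m\rceil-1}$ outer tuples of $\sum_i |L_i|$ telescopes to $O(n^m)$ per group times the number of outer tuples in which that group's index is free, and then invoke Markov / the balanced-hash guarantee to pass from "on average" to "with the space bound respected" exactly as in Wang's and Lincoln et al.'s arguments.
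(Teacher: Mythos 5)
Your overall architecture matches the paper's: partition into $\lceil k/m\rceil$ groups, hash with an almost-linear almost-balanced $h$ into range $n^{m-\delta}$, enumerate the $O(n^{(\lceil k/m\rceil-1)(m-\delta)})$ choices of per-group bucket values with the last one forced to $h(t)-\sum_i t_i$, extract each group's matching $m$-tuples via $m$SUM reporting, and feed the partial sums into a $\lceil k/m\rceil$SUM instance. Your padding trick for the non-divisible case (filling the last, smallest group with singleton zero arrays while keeping the range at $n^{m-\delta}$) is a reasonable alternative to the paper's unequal-group treatment and arguably cleaner there.

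However, there is a genuine gap exactly at the point you flag as needing "genuine care": bounding $|L_i|$. Your resolution — the total over all $R$ bucket values is $O(n^m)$, so amortize or invoke Markov — does not establish the theorem's claim that \emph{each} $\lceil k/m\rceil$SUM instance has $O(n^{\delta})$ integers per array, and it does not rescue the space bound that the reduction exists to provide. The failure mode is concrete: by linearity of $h$, all $m$-tuples whose elements have the same \emph{sum} land in the same bucket for every choice of $h$, so a single bucket can receive $\Theta(n^m)$ tuples in the worst case (already with $m=2$ and $A_1=\{1,\dots,n\}$, $A_2=\{-1,\dots,-n\}$, the sum $0$ is realized by $n$ pairs, exceeding $n^{\delta}$ for $\delta<1$). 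No averaging over outer tuples helps, because that heavy bucket recurs in a constant fraction ($1/R$) of them, and Markov only tells you most instances are fine, not what to do with the ones that are not. The almost-balanced guarantee is a statement about a \emph{set} of distinct values, so the correct fix — and the paper's — is to observe that only the sum of a tuple matters downstream, and to deduplicate by (original-value) sum as tuples are reported, e.g.\ via a balanced search tree or dynamic perfect hashing over $L_i$; balancedness then applies to the at most $n^m$ \emph{distinct} sums and yields the $O(n^{\delta})$ per-bucket bound. This "removing duplicate sums" step is a load-bearing idea of the reduction (it recurs in the paper's Las Vegas variant of Wang's algorithm and in the unequal-group case), and your proof does not contain it.
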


\begin{proof}
Given an instance of $k$SUM that contains $k$ arrays $A_1,A_2,...,A_k$ with $n$ integers in each of them and a target number $t$, we do the following (for now, we assume that $k$ is a multiple of $m$. Notice that $k$ and $m$ are considered as constants):
\begin{enumerate}

 \item Partition the $k$ arrays into $k/m$ groups of $m$ arrays. We denote the $i$th group in this partition by $G_i$.
 \item Pick an almost-linear almost-balanced hash function $h: [u] \rightarrow [n^{m-\delta}]$ and apply it to each element in every array ($[u]$ is some unbounded universe).
 \item For each possible choice of $t_1,t_2,...,t_{k/m-1} \in [n^{m-\delta}]$:
\begin{enumerate}[{3.}1]

 \item Find in each group $G_i$ all $m$-tuples $(a_{(i-1)m+1}^{j_1},a_{(i-1)m+2}^{j_2},...,a_{im}^{j_m})$, where $a_x^j$ is the $j$th element in $A_x$, such that $h(a_{(i-1)m+1}^{j_1}+a_{(i-1)m+2}^{j_2}+...+a_{im}^{j_m})= t_i$. We can find these $m$-tuples by solving $m$SUM reporting with group $G_i$ (after applying $h$) and the target number $t_i$. All $m$-tuples that are found are saved in a list $L_i$ ($L_i$ contains $m$-tuples that are found for a specific choice of $t_i$, after this choice is checked, as explained below, they are replaced by $m$-tuples that are found for a new choice of $t_i$).
 \item For $G_{k/m}$, find all $m$-tuples $(a_{k-m+1}^{j_1},a_{k-m+2}^{j_2},...,a_{k}^{j_m})$, such that \\ $h(a_{k-m+1}^{j_1}+a_{k-m+2}^{j_2}+...+a_{k}^{j_m})= t_{k/m}$. We can find these $m$-tuples by solving $m$SUM reporting with group $G_{k/m}$ (after applying $h$) and the target number $t_{k/m}$. All $m$-tuples that are found are saved in the list $L_{k/m}$. The value of the target number $t_{k/m}$ is fully determined by the values of $t_i$ we choose for the other groups, as the overall sum must be $h(t)$ in order for the original sum of elements to be $t$. Therefore, for $G_{k/m}$ the target value is $t_{k/m} = h(t)-\sum_{i=1}^{k/m-1}{t_i}$.
 \item For every $i \in [k/m]$, create an array $B_i$. For each $m$-tuple in $L_i$, add the sum of the elements of this tuple to $B_i$.
 \item Solve a $\frac{k}{m}$SUM instance with arrays $B_1,B_2,...,B_{k/m}$ and the target value $t$. If there is a solution to this $\frac{k}{m}$SUM instance return 1 - there is a solution to the original $k$SUM instance.
\end{enumerate}
\item Return 0 - there is no solution to the original $k$SUM instance.
\end{enumerate}

\textbf{Correctness}. If the original $k$SUM instance has a solution $a_1+ a_2+...+a_k=t$ such that $a_i \in A_i$ for all $i \in [k]$, then this solution can be partitioned to $k/m$ sums: $a_1+a_2+...+a_m=t'_1$, $a_{m+1}+a_{m+2}+...+a_{2m}=t'_2$,..., $a_{k-m+1}+a_{k-m+2}+...+a_k=t'_{k/m}$ for some integers $t'_1,t'_2,...t'_{k/m}$ such that $t'_{k/m} = t-\sum_{i=1}^{k/m-1}{t'_i}$. Therefore, by applying a hash function $h$, there is a solution to the original $k$SUM instance only if there are $t_1,t_2,...,t_{k/m-1} \in [n^{m-\delta}]$ such that: (a) $h(a_1+a_2+...+a_m)=t_1$, $h(a_{m+1}+a_{m+2}+...+a_{2m})=t_2$,..., $h(a_{k-m+1}+a_{k-m+2}+...+a_k)=t_{k/m}$ (b) For all $i$, $t_i = h(t'_i)$. This is exactly what is checked in step 3. However, as the hash function $h$ may cause false-positives (that is, we may have $t_1,t_2,...,t_{k/m-1} \in [n^{m-\delta}]$ such that their sum is $h(t)$ and $h(a_1+a_2+...+a_m)=t_1$, $h(a_{m+1}+a_{m+2}+...+a_{2m})=t_2$,..., $h(a_{k-m+1}+a_{k-m+2}+...+a_k)=t_{k/m}$, but $a_1+ a_2+...+a_k \neq t$), we need to verify each candidate solution. This is done in step (3.4).

The correctness of using $m$SUM (reporting) in steps (3.1) and (3.2) is due to the \emph{linearity} property of $h$ (see the note in Section~\ref{sec:preliminaries}). This linearity implies that finding all $m$-tuples in $G_i$ such that $h(a_{(i-1)m+1}^{j_1}+a_{(i-1)m+2}^{j_2}+...+a_{im}^{j_m})= t_i$ is equivalent to finding all $m$-tuples in $G_i$ such that $h(a_{(i-1)m+1}^{j_1})+h(a_{(i-1)m+2}^{j_2})+...+h(a_{im}^{j_m})= t_i$.

Regarding steps (3.3) and (3.4) we have the following observation:

\begin{observation}
The number of $m$-tuples that are saved in steps (3.1) and (3.2) in some $L_i$ for each possible value of $t_i$ is no more than $O(n^{\delta})$.
\end{observation}

The total number of $m$-tuples in some group $G_i$ is $n^m$. As $h$ is an almost-\emph{balanced} hash function (that can become balanced as explained in Appendix~\ref{apx:balanced}) with range $[n^{m-\delta}]$, the number of $m$-tuples that $h$ applied to the sum of their elements equals $t_i$ is expected to be at most $O(n^{\delta})$. However, this is true only if all these $m$-tuples have a different sum of elements. Unfortunately, there may be many $m$-tuples that the sum of their elements is equal, so all these $m$-tuples are mapped by $h$ to the same value $t_i$. Nevertheless, tuples with equal sum of elements are all the same for our purposes (we do not need duplicate elements in any $B_i$), as we are interested in the \emph{sum} of elements from all arrays no matter which specific elements sum up to it.

That being said, in steps (3.1) and (3.2) we do not add to $L_i$ every $m$-tuple that the sum of the elements of this tuple is $t_i$. Instead, for each $m$-tuple that $h$ over the sum of its elements equals $t_i$, we check if there is already a tuple with the same sum of elements in $L_i$ and only if there is no such tuple we add our $m$-tuple to $L_i$. In order to efficiently check for the existence of an $m$-tuple with the same sum in $L_i$, we can save the elements of $L_i$ in a balanced search tree or use some dynamic perfect hashing scheme. We call the process of removing $m$-tuples with same sum from $L_i$ the \textbf{removing duplicate sums} process.

The total number of $m$SUM and $\frac{k}{m}$SUM instances is determined by the number of possible choices for $t_1,t_2,...,t_{k/m-1}$ that is $O(n^{(k/m-1)(m-\delta)})$. Notice that $k$ and $m$ are fixed constants.

\medskip

\textbf{Modifications in the self-reduction for $k$ that is not a multiple of $m$}. In case $k$ is not a multiple of $m$, we partition the $k$ arrays into $\lceil k/m \rceil$ groups such that some of them have $m$ arrays and the others have $m-1$ arrays. In any case when we partition into groups of unequal size the range of the hash function $h$ is determined by the smallest group. If the smallest group has $d$ arrays then we use $h: [u] \rightarrow [n^{d-\delta}]$. Using this $h$ for groups of size $d$, we get all $d$-tuples that $h$ applied to their sum of elements equals some constant $t_i$. We expect $O(n^{\delta})$ such tuples (if we exclude d-tuples with the same sum as explained previously). However, for groups with more than $d$ arrays, say $d+\ell$, we expect the number of $(d+\ell)$-tuples that $h$ applied to their sum of elements equals $t_i$ to be $O(n^{\ell+\delta})$. Therefore, in order to just save all these tuples we must spend more space than we can afford to use. Therefore, we will only save $O(n^{\delta})$ of them in each time.

However, in order to be more efficient, we do not start solving $(d+\ell)$SUM reporting for every $O(n^{\delta})$ tuples we report on. Instead, we solve $(d+\ell)$SUM reporting once for all the expected $O(n^{\ell+\delta})$ $(d+\ell)$-tuples that $h$ applied to their sum of elements equals $t_i$. We do so by pausing the execution of $(d+\ell)$SUM reporting whenever we report on $O(n^{\delta})$ tuples. After handling the reported tuples we resume the execution of the paused $(d+\ell)$SUM reporting. We call this procedure of reporting on demand a partial output of the recursive calls, the \textbf{paused reporting} process.

As noted before, the number of $(d+\ell)$-tuples that $h$ applied to their sum of elements equals $t_i$ may be greater than $O(n^{\ell+\delta})$, because there can be many $(d+\ell)$-tuples that the sum of their elements is equal. We argued that we can handle this by saving only those tuples that the sum of their elements is unequal. However, in our case we save only $O(n^{\delta})$ tuples out of $O(n^{\ell+\delta})$ tuples, so we do not have enough space to make sure we do not save tuples that their sums were already handled. Nevertheless, the fact that we repeat handling tuples with the same sum of elements is not important since we anyway go over all possible tuples in our $(d+\ell)$SUM instance. The only crucial point is that in the last group that its target number is fixed, we have only $O(n^{\delta})$ elements for each $t_{\lceil k/m \rceil}$. This is indeed what happens if we take that group to be the group with the $d$ arrays (the smallest group). We call this important observation the \textbf{small space of fixed group} observation. That being said, our method can be applied even in case we partition to groups of unequal number of arrays.
\end{proof}

Using this self-reduction scheme we obtain the following Las Vegas solution to $k$SUM:

\setcounter{theorem}{4}

\begin{lemma}
For $k \geq 2$, $k$SUM can be solved by a Las Vegas algorithm following a self-reduction scheme that runs in $O(n^{k-\delta f(k)})$ time using $O(n^{\delta})$ space, for $0 \leq \delta \leq 1$ and $f(k)\ge \sum_{i=1}^{\log{\log{k}}}{k^{1/2^i}} - \log{\log{k}} - 2$.
\end{lemma}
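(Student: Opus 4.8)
The plan is to apply the self-reduction of Theorem~\ref{thm:self_reduction} recursively, at every level halving the parameter, i.e.\ invoking it with $m=\lceil\sqrt{k}\rceil$ so that both the $m$SUM reporting instances and the $\lceil\frac{k}{m}\rceil$SUM instance it produces are (essentially) $\sqrt{k}$SUM instances. Given a space budget $O(n^{\delta})$, I would run Theorem~\ref{thm:self_reduction} with $m=\lceil\sqrt{k}\rceil$ and hash range $[n^{m-\delta}]$, obtaining $O(n^{(\lceil k/m\rceil-1)(m-\delta)})$ instances of $m$SUM reporting on $n$-element arrays and the same number of instances of $\lceil\frac{k}{m}\rceil$SUM on $O(n^{\delta})$-element arrays (the \textbf{removing duplicate sums} step keeps each $B_i$ of size $O(n^{\delta})$). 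Each $m$SUM reporting instance is solved by the same scheme recursively, carrying the reporting requirement downward and using the \textbf{paused reporting} process so that at most $O(n^{\delta})$ reported tuples are ever in memory; each $\lceil\frac{k}{m}\rceil$SUM instance on $N=n^{\delta}$ elements is solved recursively with full $N$ space, i.e.\ with the ``$\delta=1$'' instantiation. Since the space of the $L_i$ and $B_i$ is reused across the loop over the targets $t_1,\dots,t_{\lceil k/m\rceil-1}$ and the recursion depth is $O(\log\log k)=O(1)$, the total space stays $O(n^{\delta})$.

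Writing the running time as $O(n^{k-\delta f(k)})$ and setting $q=k/m$, the exponent of the $\lceil\frac{k}{m}\rceil$SUM part works out to $(q-1)(m-\delta)+\delta(q-f(q)) = k-m+\delta-\delta f(q)$, and that of the reporting part to $(q-1)(m-\delta)+m-\delta f(m) = k-\delta\bigl(q-1+f(m)\bigr)$. With $m=\sqrt{k}$ the reporting part demands $f(k)\le\sqrt{k}-1+f(\sqrt{k})$ for every $\delta$, while the $\lceil\frac{k}{m}\rceil$SUM part demands only $f(k)\le\frac{\sqrt{k}}{\delta}-1+f(\sqrt{k})$, which the former implies; so it suffices to let $f$ satisfy $f(k)\le\sqrt{k}-1+f(\sqrt{k})$. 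Unrolling this bound — the parameter is squared down from $k$ to $2$ in $\log\log k$ steps — yields $f(k)\ge\sum_{i=1}^{\log\log k}k^{1/2^i}-\log\log k+f(2)$, and since $2$SUM is trivial we may take $f(2)\ge0$ (and for the remaining small-$k$ base cases Theorem~\ref{thm:ksum_las_vegas} gives $f\ge\sqrt{2k}-2$); the stated bound, with its extra additive $-2$, then follows once the slack discussed next is absorbed into it.

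The points I expect to require real care are: (i) keeping every recursion level inside the $O(n^{\delta})$ space budget while handling \emph{reporting} rather than decision — this is precisely the role of the \textbf{paused reporting}, \textbf{removing duplicate sums}, and \textbf{small space of fixed group} observations from the proof of Theorem~\ref{thm:self_reduction}, and one must check they compose over the $\log\log k$ levels without the space, or the ``$+\,\mathrm{output}$'' term, blowing up (by almost-balancedness the number of distinct tuples per target stays $O(n^{\delta})$ at every level, so the output term is dominated by the decision cost); (ii) $\sqrt{k}$ is generally not an integer, so each level must use the unequal-group-size variant of Theorem~\ref{thm:self_reduction}, which introduces a bounded additive loss in the exponent — this is what makes the bound read $-\log\log k-2$ instead of $-\log\log k$; and (iii) turning the expected guarantees of the almost-linear almost-balanced hash family into a genuine Las Vegas bound, which is handled by the balancing argument of Appendix~\ref{apx:balanced} applied independently at each level. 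Everything else is routine bookkeeping of exponents.
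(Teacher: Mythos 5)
Your proposal is correct, and it reaches the stated bound through the paper's own self-reduction machinery, but with a genuinely different optimization of the parameter $m$ in the sublinear-space regime. The paper argues in two stages: it first solves the linear-space case ($\delta=1$) by recursing with $m=\sqrt{k}$, obtaining $T(k,n,n)=O(n^{k-\sum_i k^{1/2^i}+\log\log k})$, and then, for $\delta<1$, it applies the self-reduction once at the top level with $m=1$ (which it identifies as the best choice there), reducing $k$SUM on $n$ elements to $n^{(k-1)(1-\delta)}$ instances of $k$SUM on $n^{\delta}$ elements, each solved by the linear-space algorithm. You instead keep $m=\lceil\sqrt{k}\rceil$ at every level and for every $\delta$, running a single induction in which the $m$SUM reporting subproblems inherit the same $\delta$ while the $\lceil k/m\rceil$SUM subproblems on $n^{\delta}$ elements use full relative space; your observation that the reporting branch's constraint $f(k)\le\sqrt{k}-1+f(\sqrt{k})$ dominates the other branch's $f(k)\le\sqrt{k}/\delta-1+f(\sqrt{k})$ whenever $\delta\le1$ is exactly what collapses both cases into the single recurrence the paper solves only for $\delta=1$, so you land on the same $f(k)\ge\sum_{i=1}^{\log\log k}k^{1/2^i}-\log\log k-2$. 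The two routes differ only in lower-order terms: the paper's top-level $m=1$ step yields exponent $k-\delta\sum_i k^{1/2^i}+\delta(\log\log k+1)-1$, which is slightly better for small $\delta$, while your uniform recursion is marginally better for $\delta$ near $1$; both are absorbed by the $-2$ slack, which (as you correctly note, and as the paper states after the lemma) exists to cover the rounding losses when $\sqrt{k}$ is not an integer. Your handling of the base case ($f(2)\ge0$; the paper's $\tilde{O}(n^{2-\delta})$ algorithm for $2$SUM in fact gives $f(2)=1$) and of the space and output bookkeeping via paused reporting, duplicate-sum removal, and the balancing of Appendix~\ref{apx:balanced} matches the paper's treatment.
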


\begin{proof}
Using our self-reduction from Theorem~\ref{thm:self_reduction}, we can reduce a single instance of $k$SUM to many instances of $m$SUM and $\frac{k}{m}$SUM for $m<k$. These instances can be solved recursively by applying the reduction many times.

\textbf{Solving the base case}. The base case of the recursion is $2$SUM that can be solved in the following way: Given two arrays $A_1$ and $A_2$, each containing $n$ numbers, our goal is to find all pairs of elements $(a_1,a_2)$ such that $a_1 \in A_1,a_2 \in A_2$ and $a_1+a_2=t$. This can be done easily in $\tilde{O}(n)$ time and $O(n)$ space by sorting $A_2$ and finding for each element in $A_1$ a matching element in $A_2$ using binary search. If we want to use only $O(n^{\delta})$ space we can do it by sorting only $O(n^{\delta})$ elements from $A_2$ each time and finding among the elements of $A_1$ a matching pair. This is done by scanning all elements of $A_1$ and binary searching the sorted portion of $A_2$. The total time for this procedure is $\tilde{O}(n^{2-\delta})$. Using hashing, following the general scheme we described previously, we can also obtain the same space-time tradeoff. We apply $h: [u] \rightarrow [n^{1-\delta}]$ to all elements of $A_1$ and $A_2$. For each value $t_1 \in [n^{1-\delta}]$ we find all elements $a_i \in A_1$ such that $h(a_i)=t_1$ and all elements $a_i \in A_2$ such that $h(a_i)=h(t)-t_1$. These elements form two arrays with $O(n^{\delta})$ elements in expectation, as we use an almost balanced hash function. These arrays serve as a 2SUM instance with $O(n^{\delta})$ elements, which we can solve by the regular (almost) linear time and linear space algorithm mentioned before. That being said, we get an $\tilde{O}(n^{2-\delta})$ time and $O(n^{\delta})$ space algorithm to solve 2SUM for any $0 \leq \delta \leq 1$.

We now analyse the running time of this solution. Denote by $T(k,n,s)$ the time needed to solve $k$SUM on input arrays of size $n$ with space usage at most $O(s)$. The full recursive process we have described to solve $k$SUM uses $O(n^{\delta})$ space (notice that the number of levels in the recursion depends only on $k$ that is considered constant) with the following running time: $T(k,n,n^{\delta}) = n^{(k/m-1)(m-\delta)}(T(m,n,n^{\delta})+T(k/m,n^{\delta},n^{\delta}))$. In order to solve the running time recursion, we start by solving it for the case that $\delta=1$. For this case we have that  $T(k,n,n) = n^{(k/m-1)(m-1)}(T(m,n,n)+T(k/m,n,n))$. The best running time in this case is obtained by balancing the two expressions within the parenthesis, which is done by setting $m=\sqrt{k}$. We have that $T(k,n,n)=2n^{(\sqrt{k}-1)(\sqrt{k}-1)}T(\sqrt{k},n,n) = 2n^{k-2\sqrt{k}+1}T(\sqrt{k},n,n)$. Solving this recursion we get that $T(k,n,n) = O(n^{k-\sum_{i=1}^{\log{\log{k}}}{k^{1/2^i}}+\log{\log{k}}})$.

Now, that we have solved the linear space case we can obtain a solution for any $\delta<1$ by plugging in this last result in our recursion. We have that $T(k,n,n^{\delta}) =$ \\ $n^{(k/m-1)(m-\delta)}(T(m,n,n^{\delta})+T(k/m,n^{\delta},n^{\delta})) =$ \\$n^{(k/m-1)(m-\delta)}(T(m,n,n^{\delta})+O(n^{\delta(k-\sum_{i=1}^{\log{\log{k}}}{k^{1/2^i}}+\log{\log{k}})}))$. It turns out that the best running time is obtained by setting $m=1$. For this value of $m$ we have that $T(k,n,n^{\delta}) = n^{(k-1)(1-\delta)}(T(1,n,n^{\delta})+O(n^{\delta(k-\sum_{i=1}^{\log{\log{k}}}{k^{1/2^i}}+\log{\log{k}})})) =$ \\ $n^{(k-1)(1-\delta)}(O(n)+O(n^{\delta(k-\sum_{i=1}^{\log{\log{k}}}{k^{1/2^i}}+\log{\log{k}})})) = O(n^{k-\delta\sum_{i=1}^{\log{\log{k}}}{k^{1/2^i}}+\delta(\log{\log{k}}+1)-1})$.
\end{proof}

Our self-reduction for the case $m=1$ becomes identical to the one presented by \\ Wang~\cite{Wang14}. However, the reduction by Wang is a reduction from $k$SUM to $k$SUM on a smaller input size, whereas our reduction is a general reduction from $k$SUM to $m$SUM for any $m<k$. Therefore, Wang has to present a different algorithm (discussed later in this paper) to solve $k$SUM using linear space. However, as this algorithm is Monte Carlo the whole solution is Monte Carlo. Using our generalized self-reduction we obtain a complete solution to $k$SUM. We have a linear space solution by choosing $m=\sqrt{k}$ and then we can use it to obtain a Las Vegas solution to $k$SUM for any $\delta \leq 1$ by choosing $m=1$.

\medskip

Regarding the self-reduction and its implications we should emphasize three points. The first one concerns our removing duplicate sums process. We emphasize that each time we remove a duplicate sum we regard to the \emph{original} values of the elements within that sum. An important point to observe is that duplicate sums that are caused by any hash function along the recursion, which are not duplicate sums according to the original values, do not affect the running time of our reduction. This is because the range of a hash function in a higher level of the recursion is larger than the total number of tuples we have in lower levels of the recursion. Thus, the number of duplicate sums that are caused by some hash function along the recursion is not expected to be more than $O(1)$. The second issue that we point out is the reporting version of $k$SUM and the output size. In our reduction in the top level of the recursion we solve $k$SUM without the need to report on all solutions. In all other levels of the recursion we have to report on all solutions (expect for duplicate sums). In our analysis we usually omit all references to the output size in the running time (and interchange between $k$SUM and its reporting variant). This is because the total running time that is required in order to report on all solutions is no more than $O(n^m)$ (for all levels of recursion), which does not affect the total running time as $m \leq k/2$. The third issue concerns rounding issues. In the proof of the general self-reduction we presented a general technique of how to handle the situation where $k$ is not a multiple of $m$. In order to make presentation clear we omit any further reference to this issue in the proof of the last lemma and the theorem in the next section. However, we emphasize that in the worst case the rounding issue may cause an increase by one in the exponent of the running time of the linear space algorithm. This is justified by the fact that the running time of the linear space algorithm is increased by one in the exponent or remains the same as we move from solving $k$SUM to solving $(k+1)$SUM. Moreover, the gap between two values of $k$, that the exponent of the running time does not change as we move from solving $k$SUM to $(k+1)$SUM, increases as a function of $k$. With that in mind, we decrease by one the lower bound on $f(k)$ and $g(k)$ in last lemma and the next theorem.

\medskip

In the following sections we present other benefits of our general self-reduction scheme.

\section{Improved Deterministic Solution for $k$SUM}
Using the techniques of~\cite{LWWW16} our randomized solution can be transformed to a deterministic one by imitating the hash function behaviour in a deterministic way. This way we get the following result:

\setcounter{theorem}{1}

\begin{theorem}
For $k \geq 2$, $k$SUM can be solved by a deterministic algorithm that runs in $O(n^{k-\delta g(k)})$ using $O(n^{\delta})$ space, for $0 \leq \delta \leq 1$ and $g(k)\ge \sqrt{k} - 2$.
\end{theorem}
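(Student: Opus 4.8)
The plan is to derandomize the Las Vegas algorithm from the preceding lemma (and Theorem~\ref{thm:ksum_las_vegas}) by replacing the almost-linear almost-balanced hash function at every level of the self-reduction with an explicit deterministic substitute, following the approach of Lincoln et al.~\cite{LWWW16}. First I would recall what the hash function is actually used for in the self-reduction: given $k/m$ groups of $m$ arrays each, we pick a bucketing of $[u]$ into $n^{m-\delta}$ parts such that, for each target $t_i$, the number of $m$-tuples in group $G_i$ whose element-sum falls into bucket $t_i$ is $O(n^{\delta})$ (after removing duplicate sums). The only property we need is this almost-balanced guarantee for the specific (polynomially many) sets of sums that arise; the linearity is used so that bucketing a sum is the same as summing the bucket-indices. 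So the task is: deterministically and in small space produce a partition of the integers into $n^{m-\delta}$ classes that is balanced on the relevant set of $m$-fold sumsets, where ``relevant'' means the $O(n^m)$ distinct sums obtainable from $G_i$.

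The key steps, in order: (1) Use the deterministic almost-balanced partitioning primitive of~\cite{LWWW16} — roughly, sort (a portion of) the multiset of sums and cut it into $n^{m-\delta}$ contiguform blocks of $O(n^{\delta})$ distinct values each, or use their self-reducing deterministic $2$SUM machinery to enumerate the sums in sorted order using only $O(n^{\delta})$ space and $\tilde O(n^{2-\delta})$-type overhead per level. This gives, without any randomness, the ``at most $O(n^{\delta})$ tuples per class'' property that the single Observation in the proof of Theorem~\ref{thm:self_reduction} extracted from almost-balancedness. (2) Feed this into exactly the recursion $T(k,n,n^{\delta}) = n^{(k/m-1)(m-\delta)}\bigl(T(m,n,n^{\delta}) + T(k/m, n^{\delta}, n^{\delta})\bigr)$ from the lemma, with the deterministic base case: $2$SUM in $\tilde O(n^{2-\delta})$ time and $O(n^{\delta})$ space by the sort-and-binary-search argument already given in the excerpt (which is fully deterministic, and — unlike the hashing version — works over the reals). (3) Solve the recursion first at $\delta = 1$ by balancing the two inner terms, i.e. setting $m = \sqrt{k}$, which yields $T(k,n,n) = 2 n^{k-2\sqrt k + 1} T(\sqrt k, n, n)$ and hence $T(k,n,n) = O(n^{k - \sqrt k + O(\log\log k)})$ up to the lower-order $\sum_{i} k^{1/2^i}$ versus $\sqrt k$ discrepancy; then propagate to general $\delta \le 1$ by plugging the linear-space bound back in and taking $m = 1$, exactly as in the lemma. (4) Absorb the rounding loss (the ``$k$ not a multiple of $m$'' issue, worth at most $+1$ in the exponent) and the lower-order terms into the stated bound $g(k) \ge \sqrt k - 2$, as the paper's remark about decreasing the lower bound on $g(k)$ by one already flags.

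The main obstacle, and the place where~\cite{LWWW16}'s ideas really do the work, is step (1): making the balancing deterministic without blowing up either space or time. Randomness gave us a clean $O(n^{\delta})$ expected bucket size ``for free''; deterministically we must either (a) actually compute order statistics of the set of $m$-fold sums of $G_i$ using only $O(n^{\delta})$ space — which is itself essentially a small-space $m$SUM-selection subroutine and must be charged carefully so it does not dominate — or (b) use an explicit family of splitters/partitions with a guaranteed balance property on all sumsets of the input. I expect to follow route (a), reusing the deterministic $2$SUM-with-small-space engine as the workhorse inside the reporting calls, and the bookkeeping challenge is to verify that pausing and resuming these deterministic enumerations (the analogue of the \textbf{paused reporting} process) still costs only $\tilde O(n^m)$ total per level and therefore leaves the $n^{(k/m-1)(m-\delta)}$ factor as the dominant term. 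A secondary subtlety is that the deterministic partition need not be linear, so the clean identity ``hash of a sum = sum of hashes'' may fail; I would handle this exactly as the excerpt handles almost-linearity — track a bounded additive offset per level, which only changes the $O(n^{(k/m-1)(m-\delta)})$ choices of $(t_1,\dots,t_{k/m-1})$ by constant factors — or, alternatively, restrict to partitions into intervals, for which sums behave predictably enough for the argument to go through. Once step (1) is in place, steps (2)–(4) are the same recursion-solving as in the Las Vegas lemma with $\sqrt{2k}$ replaced by $\sqrt{k}$, since the deterministic base case saves a $2$SUM-style factor rather than a more aggressive randomized one.
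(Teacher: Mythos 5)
Your step (1) correctly identifies the paper's derandomization engine: the proof does follow your route (a), replacing the hash function by \emph{sorted parts} of the array $SUMS_{G_i}$ of all $n^m$ group sums, extracting each sorted block of $n^{\delta}$ sums with the Lincoln et al.\ data structure in $O(n^m)$ time and $O(n^{\delta})$ space, and invoking their domination-lemma corollary to bound the number of surviving $\frac{k}{m}$SUM subproblems by $O(n^{(k/m-1)(m-\delta)})$. The gap is in steps (2)--(3). You keep the two-term recursion $T(k,n,n^{\delta}) = n^{(k/m-1)(m-\delta)}\bigl(T(m,n,n^{\delta}) + T(k/m,n^{\delta},n^{\delta})\bigr)$ and propose to iterate it (first at $\delta=1$ with $m=\sqrt{k}$ to get the $\sum_i k^{1/2^i}$ savings, then at general $\delta$ with $m=1$). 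That recursion is only valid because the hash function is (almost-)linear: ``all $m$-tuples of $G_i$ whose hashed sum equals $t_i$'' is itself an $m$SUM instance on the hashed elements, so the inner work on a group can recurse. A sorted-part partition has no such structure: membership of an $m$-tuple in the $j$-th block of the sorted $SUMS_{G_i}$ is not expressible as an $m$SUM condition on any per-element transform, and your proposed fix --- tracking a bounded additive offset as one does for almost-linearity --- does not apply, since the block index of a sum is nowhere near any sum of per-element indices. The paper says this explicitly (``we do not have the recursive structure as in the randomized solution'') and replaces $T(m,n,n^{\delta})$ by a flat $n^m$ cost per sorted part, giving $T(k,n,n^{\delta}) = n^{(k/m-1)(m-\delta)}\bigl(n^m + T(k/m,n^{\delta},n^{\delta})\bigr)$ with the \emph{trivial} algorithm for the inner $\frac{k}{m}$SUM instance; balancing $n^m$ against $n^{\delta k/m}$ forces $m=\sqrt{k}$ in a single level of recursion and yields $n^{k-\sqrt{k}\delta+\delta}$.

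A secondary confusion follows from this: you attribute the drop from $\sqrt{2k}$ to $\sqrt{k}$ to the deterministic base case. It actually comes from the point above --- the $\sqrt{2k}$ bound arises from Wang's triangular-number recursion, the $\sum_i k^{1/2^i}$ bound from iterating the randomized self-reduction, and the deterministic $\sqrt{k}$ from the one-level recursion dominated by the $n^m$ sorted-part cost. Indeed, if your iterated recursion were valid deterministically you would be proving a deterministic bound of roughly $\sum_i k^{1/2^i}-\log\log k-2$, which is stronger than the theorem claims and which the sorted-part technique does not support.
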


\begin{proof}
We partition the $k$ arrays into $k/m$ groups of $m$ arrays. We denote the $i$th group in this partition by $G_i$. For every group $G_i$, there are $n^m$ sums of $m$ elements, such that each element is from a different array of the $m$ arrays in $G_i$. We denote by $SUMS_{G_i}$ the array that contains all these sums. A \emph{sorted part} of a group $G_i$ is a continuous portion of the sorted version of $SUMS_{G_i}$. The main idea for imitating the hash function behaviour in a deterministic way is to focus on sorted parts of size $n^{\delta}$, one for each of the first $k/m-1$ groups. Then the elements from the last group that are candidates to complete the sum to the target number are fully determined. Each time we pick different $n^{\delta}$ elements out of these elements and form an instance of ($\frac{k}{m}$)SUM such that the size of each array is $n^{\delta}$. The crucial point is that the total number of these instances will be $O(n^{(k/m-1)(m-\delta)})$ as in the solution that uses hashing techniques. This is proven based on the domination lemma of~\cite{LWWW16} (see the full details in Section 3.1 of~\cite{LWWW16}). Lincoln et al.~\cite{LWWW16} present a corollary of the domination lemma as follows: \emph{Given a $k$SUM instance $L$, suppose $L$ is divided into $g$ groups $L_1,..., L_g$ where $|L_i| = n/g$ for all $i$, and for all $a \in L_i$ and $b \in L_{i+1}$ we have $a \leq b$. Then there are $O(k \cdot g^{k-1})$ subproblems $L'$ of $L$ such that the smallest $k$SUM of $L'$ is less than zero and the largest $k$SUM of $L'$ is greater than zero.} Following our scheme, $g$ in this corollary equals $n^{m-\delta}$ in our case (there are $g$ groups of size $n^{\delta}$ in each $SUMS_{G_i}$) and the $k$ in the corollary is in fact $k/m$ in our case. Therefore, we get that the total number of instances that have to be checked is indeed $O(n^{(k/m-1)(m-\delta)})$.

In order for this idea to work, we need to obtain a sorted part of size $n^{\delta}$ from each group $G_i$. In this case, we do not have the recursive structure as in the randomized solution because we no longer seek for $m$ elements in each group that sum up to some target number, but rather we would like to get a sorted part of each group. However, we can still gain from the fact that we have only $O(n^{(k/m-1)(m-\delta)})$ instances of ($\frac{k}{m}$)SUM.

Lincoln et al.~\cite{LWWW16} presented a simple data structure that obtains a sorted part of size $O(S)$ from an array with $n$ elements using $O(n)$ time and $O(S)$ space. We can use this data structure in order to obtain a sorted part of $n^{\delta}$ elements for each group $G_i$ by considering the elements of the array $SUMS_{G_i}$. Consequently, a sorted part of $n^{\delta}$ elements from $G_i$ can be obtained using $O(n^m)$ time and $O(n^{\delta})$ space.

Putting all parts together we have a deterministic algorithm that solves $k$SUM with the following running time: $T(k,n,n^{\delta}) = n^{(k/m-1)(m-\delta)}(n^m+T(k/m,n^{\delta},n^{\delta}))$. By setting $m=\sqrt{k}$ we have $T(k,n,n^{\delta}) = n^{k-\sqrt{k}-\sqrt{k}\delta+\delta}(n^{\sqrt{k}}+T(\sqrt{k},n^{\delta},n^{\delta}))$. Solving $k$SUM using linear space can be trivially done using $n^k$ time. Therefore, we get that $T(k,n,n^{\delta}) = n^{k-\sqrt{k}-\sqrt{k}\delta+\delta}(n^{\sqrt{k}}+n^{\delta\sqrt{k}}) = n^{k-\sqrt{k}\delta+\delta}$.
\end{proof}

The last theorem is a significant improvement over the previous results of Lincoln et al.~\cite{LWWW16} that obtain just a small improvement of at most 3 in the exponent over the trivial solution that uses $n^k$ time, whereas our solution obtains an improvement of almost $\sqrt{k}\delta$ in the exponent over the trivial solution.

\section{Las Vegas Variant of Wang's Linear Space Algorithm}\label{sec:las_vegas_wang}
Wang~\cite{Wang14} presented a Monte Carlo algorithm that solves $(T_j+1)$SUM in $O(n^{T_{j-1}+1})$ time and linear space, where $T_j = \sum_{i=1}^{j}{i}$. We briefly sketch his solution here in order to explain how to modify it in order to obtain a Las Vegas algorithm instead of a Monte Carlo algorithm. Given an instance of $k$SUM with $k$ arrays $A_1,A_2,...,A_k$ such that $k=T_j+1$, he partitions the arrays into two groups. The left group contains the first $j$ arrays and the right group all the other arrays. An almost-linear almost-balanced hash function $h$ is chosen, such that its range is $m' = \Theta(n^{j-1})$. The hash function $h$ is applied to all elements in all input arrays. Then, the algorithm goes over all possible values $v_l \in [m']$. For each such value, the first array of the left group is sorted and for all possible sums of elements from the other $j-1$ arrays (one element from each array) it is checked (using binary search) if there is an element from the first array that completes this sum to $v_l$. If there are $j$ elements that their hashed values sum up to $v_l$ they (the original values) are saved in a lookup table $T$. At most $\Theta(n)$ entries are stored in $T$. After handling the left group the right group is handled. Specifically, if the target value is $t$ the sum of elements from the arrays in the right group should be $h(t)-v_l$ (to be more accurate as our hash function is almost linear we have to check $O(1)$ possible values). To find the ($k-j$)-tuples from the right group that sum up to $h(t)-v_l$ a recursive call is done on the arrays of the right group (using their hashed version) where the target value is $h(t)-v_l$. A crucial point is that the algorithm allows the recursive call to return at most $n^{T_{j-2}+1}$ answers. For each answer that we get back from the recursion, we check, in the lookup table $T$, if the \emph{original} values of the elements in this answer can be completed to a solution that sums up to the target value $t$. The number of answers the algorithm returns is at most $num$ which in this case is $n^{T_{j-1}+1}$. If there are more answers than $num$ the algorithm returns (to the previous level in the recursion).

In order for this algorithm to work, Wang uses a preliminary Monte Carlo procedure that given an instance of $k$SUM creates $O(\log{n})$ instances of $k$SUM such that if the original instance has no solution none of these instances has a solution and if it has a solution at least one of these instances has a solution but no more than $O(1)$ solutions. The guarantee that there are at most $O(1)$ solutions is needed to ensure that each recursive call is expected to return the right number of solutions. For example, if the algorithm does a recursive call as explained before on $k-j=T_j+1-j=T_{j-1}+1$ arrays, then we expect that for each value of $h(t)-v_l$ out of the $\Theta(n^{j-1})$ possible values, at most $O((T_{j-1}+1)/n^{j-1})=O(n^{T_{j-2}+1})$ answers will be returned from the recursive call. This is because of the almost-balanced property of the hash function. However, if there are many ($k-j$)-tuples whose sum is equal (in their original values), then they will be mapped to the same hash value due to the linearity property of the hash function (to be more accurate, as our hash function is almost-linear there are $O(1)$ possible values that these elements can be mapped to). This is where Wang uses the fact that the new instance of $k$SUM has no more than $O(1)$ solutions. The number of answers that is returned from the recursive call can be limited to the expected value, as there are at most $O(1)$ ($k-j$)-tuples that have equal sum and are part of a solution because each one of these sums forms a different solution to our $k$SUM instance and there are at most $O(1)$ such solutions.

We now explain how to modify this algorithm in order to make it a Las Vegas algorithm. The idea is to use the tools we presented for our general self-reduction. This is done in the following theorem:

\begin{theorem}
For $k \geq 2$, $k$SUM can be solved by a Las Vegas algorithm that runs in $O(n^{k-\delta f(k)})$ time using $O(n^{\delta})$ space, for $0 \leq \delta \leq 1$ and $f(k)\ge \sqrt{2k} - 2$
\end{theorem}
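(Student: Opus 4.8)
The plan is to keep the skeleton of Wang's linear-space algorithm intact and to replace the single Monte Carlo ingredient in it — the preprocessing that produces $O(\log n)$ sub-instances each having at most $O(1)$ solutions — by the deterministic bookkeeping developed for the self-reduction of Theorem~\ref{thm:self_reduction}, namely the \textbf{removing duplicate sums} process, the \textbf{paused reporting} process, and the \textbf{small space of fixed group} observation. Recall that Wang needs the ``$O(1)$ solutions'' promise only to argue that each recursive call returns no more than its expected number of answers, so that capping that number at $num$ never discards a real solution; the danger is that many tuples with the \emph{same} sum get mapped by the linear hash function to the same bucket and overflow the cap. I would kill this danger at the source: whenever the algorithm gathers, at any level of the recursion, the tuples of a group whose hashed sum equals a fixed bucket value (this includes the lookup table $T$ built from the left group and every batch of answers returned by a recursive call), it runs removing-duplicate-sums on them with respect to the \emph{original} values, keeping one representative per distinct sum in a balanced search tree or dynamic perfect hash table. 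Assuming, as we may by Appendix~\ref{apx:balanced}, that the almost-balanced hash function is actually balanced, each bucket then holds at most the expected number of distinct sums, so $num$ becomes a genuine upper bound rather than one conditioned on a randomized promise: no solution is ever lost, the algorithm is always correct, hence Las Vegas. Paused reporting ensures a recursive reporting call is executed once per bucket and merely suspended each time it has emitted $num$ answers, and taking the group with the fixed target to be a smallest group keeps the final small sub-instance at $O(n^{\delta})$ elements; since the recursion depth depends only on the constant $k$, the total space stays $O(n^{\delta})$.

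For the parameters I would follow Wang. Write $T_j=j(j+1)/2$. For $k=T_j+1$, partition into a left group of the first $j$ arrays and a right group of the remaining $T_{j-1}+1=k-j$ arrays, take a hash function of range $\Theta(n^{j-1})$, and for each of the $\Theta(n^{j-1})$ buckets build the table of (deduplicated) left-group candidates by sort-and-binary-search and recurse on the right group with the complementary target, allowing it to return at most $n^{T_{j-2}+1}$ answers. With the modifications above this is a Las Vegas algorithm of expected time $R(T_j+1,n)=\tilde O\bigl(n^{j-1}(n^{j-1}+R(T_{j-1}+1,n))\bigr)$ and $O(n)$ space, which unrolls to $R(k,n)=\tilde O(n^{T_{j-1}+1})=\tilde O(n^{k-j})$. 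For a general $k\ge 2$, let $j$ be the largest integer with $T_j+1\le k$ and $r=k-(T_j+1)\in\{0,\dots,j\}$; enumerate all $n^{r}$ choices of one element from each of $r$ fixed arrays and, for each, run the above algorithm on the remaining $T_j+1$ arrays with the adjusted target. This gives expected time $\tilde O(n^{\,T_{j-1}+1+r})=\tilde O(n^{\,k-j})$ and $O(n)$ space, and since $2k=j(j+1)+2+2r\le (j+1)(j+2)$ we get $j\ge\sqrt{2k}-2$, i.e.\ the linear-space running time is $\tilde O(n^{\,k-(\sqrt{2k}-2)})$. (This already beats the linear-space bound coming from the self-reduction alone in the preceding lemma, whose exponent involves $k^{1/2}+k^{1/4}+\cdots<\sqrt{2k}$.)

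To get $O(n^{\delta})$ space for $0\le\delta<1$, I would feed this linear-space algorithm into the self-reduction of Theorem~\ref{thm:self_reduction} with $m=1$: one instance of $k$SUM on $n$ integers reduces to $O(n^{(k-1)(1-\delta)})$ trivial $1$SUM instances and $O(n^{(k-1)(1-\delta)})$ instances of $k$SUM on $O(n^{\delta})$ integers, each solved by the linear-space algorithm in expected time $\tilde O\bigl((n^{\delta})^{\,k-j}\bigr)$ using $O(n^{\delta})$ space. Multiplying, the total expected time is
\[
  \tilde O\!\left(n^{(k-1)(1-\delta)}\cdot n^{\delta(k-j)}\right)
  \;=\;\tilde O\!\left(n^{\,k-1-\delta(j-1)}\right)
  \;\le\;\tilde O\!\left(n^{\,k-\delta(\sqrt{2k}-2)}\right),
\]
the last inequality because $\delta\le 1$ and $j\ge\sqrt{2k}-2$; the $1$SUM instances contribute only $\tilde O(n^{(k-1)(1-\delta)+1})=\tilde O(n^{\,k-(k-1)\delta})$, which is no larger, and the polylogarithmic factors and the total cost of reporting partial answers across all levels of the recursion do not affect the bound, as in the preceding lemma. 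Hence $k$SUM is solved in Las Vegas expected time $O(n^{k-\delta f(k)})$ and space $O(n^{\delta})$ with $f(k)\ge\sqrt{2k}-2$.

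The step I expect to be the real work is the correctness-and-runtime analysis of the modified recursion, i.e.\ checking that trading Wang's randomized ``$O(1)$ solutions'' promise for removing-duplicate-sums truly preserves both always-correctness and the expected running time. Two points need care. First, that after deduplication every bucket at every level contains at most its expected number of entries — this is exactly what the ``balanced'' strengthening of Appendix~\ref{apx:balanced} provides, and it is what makes the cap $num$ safe, so that no candidate completing to a real solution is discarded. Second, that deduplication must be carried out on the \emph{original} values: the hash functions used at higher levels of the recursion create spurious equalities among sums, but because the range of a higher-level hash function exceeds the number of tuples present at lower levels, the expected number of such hash-induced collisions is only $O(1)$, so they neither inflate the lists nor invalidate the counts. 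Everything else is a routine unrolling of Wang's recursion with the parameters above.
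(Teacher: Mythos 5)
Your overall architecture --- adapt Wang's recursion using the removing-duplicate-sums, paused-reporting and small-space-of-fixed-group tools, establish a linear-space Las Vegas bound of $O(n^{T_{j-1}+1})$ for $(T_j+1)$SUM, then compose with the $m=1$ self-reduction --- is exactly the paper's, and your parameter arithmetic (the bound $j\ge\sqrt{2k}-2$ and the $n^{k-1-\delta(j-1)}$ composition) is correct. The gap is in how you handle the right group. You propose to deduplicate, with respect to original values, ``every batch of answers returned by a recursive call'' and to keep Wang's per-call cap $num=n^{T_{j-2}+1}$, arguing that after deduplication each bucket holds at most its expected number of distinct sums, so the cap never truncates. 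But deduplicating the right-group answers requires remembering every distinct sum already emitted for the current bucket, and a single bucket can legitimately contain $\Theta(n^{T_{j-2}+1})$ distinct sums --- far beyond the $O(n)$ (let alone $O(n^{\delta})$) space budget once $j\ge 3$. Without that deduplication, duplicate sums in the right group can push the number of tuples returned for one bucket arbitrarily above $num$, and a hard cap would then discard tuples that complete to genuine solutions, destroying Las Vegas correctness. (Note also that the balancing of Appendix~\ref{apx:balanced} is applied to \emph{one} group only; the paper explicitly allows the right group to remain unbalanced.)

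The paper's resolution is different and is the idea your argument is missing: it removes the cap on the recursive calls entirely and deduplicates only the left group's lookup table $T$, which has $O(n)$ entries per bucket and therefore fits in the space budget. Correctness is then immediate, since nothing is ever discarded, and the running time is controlled not per bucket but in aggregate: the total number of answers returned across all $\Theta(n^{j-1})$ recursive calls is at most the total number of right-group tuples, $n^{T_{j-1}+1}$, which is exactly the time budget; each answer is checked against $T$ in $O(1)$ time, and because $T$ is deduplicated on original values the expected number of completions per answer is $O(1)$. Replacing your ``the cap is a genuine upper bound'' step with this amortized accounting closes the gap; the rest of your proof then goes through as written.
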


\begin{proof}
We begin with the algorithm by Wang. The first modification to the algorithm is not to limit the number of answers returned from the recursive call. Let us look at some point in the recursion for which we have $j$ arrays in the left group and $k'-j$ in the right group where the total number of arrays is $k'=T_{j}+1$. Wang limited the total number of answers we receive from each of the $n^{j-1}$ recursive calls to be $n^{k'-j}/n^{j-1}=n^{T_{j}-j+1}/n^{j-1}=n^{T_{j-2}+1}$. This is the expected number of answers we expect to get using a balanced hash function where we do not expect to have many duplicate identical sums. However, even if we do not limit the number of answers we get back from a recursive call the total number of answers we receive back from all the $n^{j-1}$ recursive calls is at most $n^{T_{j-1}+1}$. This is simply because the number of arrays in the right group is $T_{{j-1}+1}$. As there can be duplicate sums in this right group the number of answers that we receive from each recursive call (out of the $\Theta(n^{j-1})$ recursive calls) can be much larger than the number of answers we get from another recursive call. Nevertheless, the total number of answers is bounded by the same number as in Wang's algorithm. Now, considering the left group, for every possible value of $v_l \in \Theta(n^{j-1})$ we expect the number of $j$-tuples that are their hashed sum is $v_l$ to be $O(n)$. This is true unless we have many equal sums that, as explained before, are all mapped to the same value by $h$. In order to ensure that we save only $O(n)$ $j$-tuples in the lookup table $T$, we use our "removing duplicate sums" process. That is, for each $j$-tuple that is mapped by $h$ to some specific $v_l$ we ensure that there is no previous $j$-tuple in $T$ that has the same sum (considering the original values of the elements).

Following this modification of the algorithm, we have that the left group is balanced as we expect no more than $O(n)$ entries in $T$ for each possible value of $v_l$, while the right group may not be balanced. However, what is important is that one group is balanced and the total number of potential solutions in the other groups is the same as in the balanced case. Therefore, we can apply here our "small space of fixed group" observation (see Section~\ref{sec:self_reduction}) that guarantees the desired running time. Verifying each of the answers we get from the right group can be done using our lookup table in $O(1)$ time. Since we have removed duplicate sums (using original values) the expected number of elements that can complete an answer from the right group to a solution to the original $k$SUM instance is no more than $O(1)$. This is because the number of elements mapped to some specific value of $h$ and having the same value by some $h'$ from some upper level of our recursion is not expected to be more than $O(1)$, as the range of $h'$ is at least $n^{j}$ and the number of $j$-tuples is $n^j$.  Therefore, the total running time will be $O(n^{T_{j-1}+1})$ even for our modified algorithm. Moreover, the expected number of answers that are returned by the algorithm for a specific target value is $O(n^{T_{j-2}+1})$.

We note that the answers that are returned from the right group are returned following the "paused reporting" scheme we have described in our self-reduction. We get answers one by one by going back and forth in our recursion and pausing the execution each time we get a candidate solution (it seems that it is also needed in Wang's algorithm though it was not explicitly mentioned in his description).

To conclude, by modifying Wang's algorithm so that the number of the answers returned to the previous level of recursion is not limited and by removing duplicates in the right group (within every level of recursion) we obtained a Las Vegas algorithm that solves $(T_{j}+1)$SUM in $O(n^{T_{j-1}+1})$ time and linear space. Using the self-reduction with $m=1$ we have a Las Vegas algorithm that solves $k$SUM using $O(n^{k-\delta f(k)})$ time, for $f(k) \ge \sqrt{2k} - 2$, and $O(n^{\delta})$ space, for $0 \leq \delta \leq 1$.
\end{proof}

This Las Vegas algorithm has a better running time than an algorithm using the self-reduction directly because of the additional $\sqrt{2}$ factor before the $-\sqrt{k}$ in the exponent. However, as we will explain in the following sections, there are other uses of our general self-reduction approach.

\section{Space-Time Tradeoffs for Large Space}
We now consider how to solve $k$SUM for the case where we can use space which is $O(n^{\delta})$ for $\delta>1$. We have two approaches to handle this case. The first one is a generalization of the Las Vegas algorithm from the previous section. The second uses our general self-reduction approach from Section~\ref{sec:self_reduction}.

We begin with the first solution and obtain the following result:

\setcounter{theorem}{5}
\begin{lemma}
For $k \geq 2$, $k$SUM can be solved by a Las Vegas algorithm that runs in $O(n^{k-\sqrt{\delta} f(k)})$ time using $O(n^{\delta})$ space, for integer $\frac{k}{4} \geq \delta > 1$ and $f(k)\ge \sqrt{2k} - 2$.
\end{lemma}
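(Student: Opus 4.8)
The plan is to take the Las Vegas algorithm of Section~\ref{sec:las_vegas_wang} --- our modified version of Wang's linear-space algorithm --- and rescale every hash function in it so that the extra $O(n^{\delta})$ space (rather than $O(n)$) is exploited throughout. In that algorithm a group of $j$ arrays is hashed by an almost-linear almost-balanced $h$ whose range is $\Theta(n^{j-1})$, chosen precisely so that, after the ``removing duplicate sums'' step, each bucket holds $\Theta(n)$ surviving tuples and hence fits in memory. The first step is to replace every such range by $\Theta(n^{j-\delta})$: now each bucket is expected to hold $\Theta(n^{\delta})$ surviving tuples (again counting duplicates with respect to the \emph{original} values, exactly as in Section~\ref{sec:self_reduction}), which still fits in $O(n^{\delta})$ space. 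Correspondingly the recursion no longer has to bottom out at $2$SUM (solved in $\tilde O(n)$ time and $O(n)$ space); instead we stop at $\Theta(\delta)$SUM, solved by meet-in-the-middle in $O(n^{\delta})$ time and $O(n^{\delta})$ space, and reported on in chunks of $O(n^{\delta})$ answers via the ``paused reporting'' scheme.

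The second step is to re-solve the running-time recurrence with these scaled parameters. As in Wang's algorithm the arrays are peeled off in successive recursion levels; at each level one group is the fixed group, whose target is determined by the hash-value choices made for the other groups, and we recurse on it, stream its answers back one at a time by ``paused reporting'', and test each against a lookup table holding the $O(n^{\delta})$ surviving tuples of the remaining groups in $O(1)$ time. Here the ``small space of fixed group'' observation is what guarantees that only the fixed group ever needs an $O(n^{\delta})$-sized list, while the non-fixed groups may legitimately produce many more answers. Writing $T(k,n,n^{\delta})$ for the resulting time and optimizing the group sizes, the net effect of enlarging the buckets from $n$ to $n^{\delta}$ is to enlarge the groups peeled off near the top by a factor $\approx\sqrt{\delta}$ (so the verification cost at the top becomes $n^{k-\sqrt{2k\delta}}$), which shifts the balance point and yields $T(k,n,n^{\delta}) = O(n^{k-\sqrt{\delta}f(k)})$ with $f(k)\ge\sqrt{2k}-2$; the $-2$ and the rounding needed when the relevant quantities are not integers are absorbed exactly as in the rounding remark of Section~\ref{sec:self_reduction}. (A cruder alternative would be to group $\Theta(\delta)$ arrays into super-elements and run the linear-space algorithm on the resulting smaller-arity instance, but scaling its rounding overhead by $\delta$ loses a bit more in the exponent than the direct generalization above.)

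The Las Vegas guarantee is inherited from Section~\ref{sec:las_vegas_wang}: an almost-balanced hash function can fail to balance a group only because many of its tuples share one original sum, and this is neutralized by removing duplicate sums with respect to the original values at every level, so the fixed group's bucket list is genuinely $O(n^{\delta})$ and the algorithm never errs --- it may only run longer, and its expected running time is still $O(n^{k-\sqrt{\delta}f(k)})$. The main obstacle is the cost accounting. An individual (now uncapped) recursive call on a non-fixed group may return far more than the balanced number of answers, but the total number of answers, summed over all choices of the intermediate hash targets, is governed only by the number of arrays in the fixed group; one must check that neither this verification cost, nor the cost of recursively solving the non-fixed groups, ever exceeds $n^{k-\sqrt{\delta}f(k)}$. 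Verifying these inequalities is where the hypothesis $\delta\le k/4$ enters: it keeps the top group size below $k$ and keeps the per-level work dominated by the target running time, and it is exactly the range in which the balancing goes through --- for larger $\delta$ one has to fall back on the self-reduction of Section~\ref{sec:self_reduction}, which is the content of the second approach.
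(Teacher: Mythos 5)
Your proposal matches the paper's first approach for this lemma essentially exactly: the paper generalizes the Las Vegas version of Wang's algorithm by rescaling every group to a multiple of $\delta$ arrays (parameterizing $k=\sum_{i=1}^{j}\delta i+\delta$, taking a left group of $\delta j$ arrays and a hash range of $\Theta(n^{\delta(j-1)})$ so that each bucket holds $O(n^{\delta})$ duplicate-free tuples), bottoms out at $2\delta$SUM solved by meet-in-the-middle in $O(n^{\delta})$ time and space, and proves by induction that the running time is $O(n^{k-\delta j})$ with $\delta j\approx\sqrt{2k\delta}$, exactly the plan you describe. The only real differences are presentational: you leave the recurrence and the exact peeled-off group sizes implicit where the paper carries out the induction explicitly, and the ``cruder alternative'' you set aside (treating blocks of $\delta$ arrays as super-elements and rerunning the linear-space algorithm) is in substance what the paper's direct generalization amounts to and yields the same exponent.
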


\begin{proof}
As an input we have an instance of $k$SUM with arrays $A_1,A_2,...,A_k$ such that $k=T^{\delta}_j+\delta$ (for other values of $k$ see the details below), where $T^{\delta}_j =\sum_{i=1}^{j}{\delta i}$. The target number is $t$. We partition the arrays into two groups. The left group contains the first $\delta j$ arrays and the right group all the other arrays. An almost-linear almost-balanced hash function $h$ is chosen, such that its range is $m' = \theta(n^{\delta{j-1}})$. The hash function  $h$ is applied to all elements in all input arrays. We go over all possible values $v_l \in [m']$. For each such value, we compute all hashed sums of $\delta$-tuples from the first $\delta$ arrays (each element in the sum is taken from a different array) of the left group and we sort them. We note that all duplicate sums (according to their original values) are removed. Then, for all possible hashed sums of elements from the other $\delta(j-1)$ arrays (one element from each array) we check (using binary search) if there is a hashed $\delta$-tuple from the first $\delta$ arrays that completes this hashed sum to $v_l$. If there are $\delta j$ elements such that their hashed values sum up to $v_l$ they, the original values, are saved in a lookup table $T$. Again, sums that are identical according to original values of elements are removed. At most $\theta(n^{\delta})$ entries are expected to be stored in $T$. After handling the left group the right group is handled. Specifically, if the target value is $t$ the sum of elements from the arrays in the right group should be $h(t)-v_l$ (to be more accurate as our hash function is almost linear we have to check $O(1)$ possible values). To find the ($k-\delta j$)-tuples from the right group that sum up to $h(t)-v_l$ a recursive call is done on the arrays of the right group (using their hashed version) where the target value is $h(t)-v_l$. For each answer that we get back from the recursion, we check in the lookup table $T$ if the original values of the elements in this answer can be completed to a solution that sum up to the target value $t$.

The base case of the recursion is $k=2\delta$. In this case we can split the arrays into two groups of $\delta$ arrays and compute all possible sums of $\delta$ elements (one from each of the arrays). Then, having two lookup tables containing the $n^{\delta}$ possible sums of each group, we can check in $n^{\delta}$ time if there is a pair of $\delta$ tuples one from each group such that their sum is $t$.

The running time analysis is similar to the case of $\delta$ equals $1$. Specifically, it can be shown by induction that the algorithm solves $(T^{\delta}_{j}+\delta)$SUM in $O(n^{T^{\delta}_{j-1}+\delta})$ time using $O(n^{\delta})$ space. For j=1, we have that the base case can be solved in $O(n^{\delta})$ time as required. Let us assume that the running time is correct for $j-1$ and prove it for $j$. Given an instance of $k$SUM with $k=T^{\delta}_{j}+\delta$ the algorithm goes over all possible values of $v_l$ which are $O(n^{\delta{j-1}})$. For each possible value the work that is done on the left group is $O(n^{\delta(j-1)})$. On the right group we do a recursive call on $T^{\delta}_{j-1}+\delta$ arrays that runs according to the induction hypothesis in $O(n^{T^{\delta}_{j-2}+\delta})$ time. Each returned answer is checked in constant time against the lookup table of the left group in order to verify which of the $O(1)$ possible values in the left group (if there is any) that can complete it to the target value. As $\delta(j-1)\leq T^{\delta}_{j-2}+\delta$ for $j>1$, we get that the total running time is $O(n^{\delta(j-1)}n^{T^{\delta}_{j-2}+\delta})=O(n^{T^{\delta}_{j-1}+\delta})$ as required.

As usual, for values of $k$ that are not equal to any $T^{\delta}_{j}+\delta$, we can solve $k$SUM by running $n$ instances of $(k-1)$SUM.

In this solution we have that $k=T^{\delta}_{j}+\delta=\delta T_{j} +\delta = \delta (j+1)j/2 +\delta$. Therefore, the value of $j$ is approximately $\sqrt{2k/\delta}$. Following our previous analysis in order to solve $k$SUM the running time is $O(n^{T^{\delta}_{j-1}+\delta})= O(n^{\delta(T_{j-1}+1)})=O(n^{k-\delta j})$. Consequently the running time to solve $k$SUM using $O(n^{\delta})$ space is $O(n^{k-\sqrt{\delta}f(k)})$ for $f(k) \ge \sqrt{2k} - 2$.
\end{proof}

The approach of the last theorem has one drawback - it gives no solution for the case where we can use $O(n^{\delta})$ space for non integer $\delta>1$. To solve this case we use our general self-reduction approach and obtain the following:

\setcounter{theorem}{3}
\begin{theorem}
For $k \geq 2$, $k$SUM can be solved by a Las Vegas algorithm that runs in $O(n^{k-\sqrt{\delta} f(k)})$ time using $O(n^{\delta})$ space, for $\frac{k}{4} \geq \delta > 1$ and $f(k)\ge \sqrt{2k} - 2$.
\end{theorem}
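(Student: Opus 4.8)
The plan is to deduce the non-integer case from the integer case of the preceding lemma by one application of the self-reduction of Theorem~\ref{thm:self_reduction}. Fix $\delta\in(1,k/4]$ that is not an integer, and denote the parameter of that self-reduction by $\beta$ (which, unlike the number of arrays per group, is allowed to be non-integer there). Applying Theorem~\ref{thm:self_reduction} with an integer $m$ and parameter $\beta$ turns one $k$SUM instance on $n$ integers into $O(n^{(k/m-1)(m-\beta)})$ instances of $m$SUM reporting on $n$ integers and $O(n^{(k/m-1)(m-\beta)})$ instances of $\frac{k}{m}$SUM on $O(n^{\beta})$ integers. I would choose $\beta$ so that $c:=\delta/\beta$ is a positive integer; then $n^{\delta}=(n^{\beta})^{c}$, so each $\frac{k}{m}$SUM instance on $n^{\beta}$ integers may be solved using the $c$-th power of its own input size as its space allowance, which is exactly the regime covered by the preceding lemma when $c\ge 2$ (and by Theorem~\ref{thm:ksum_las_vegas} when $c=1$). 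This is legal provided $\lceil\beta\rceil\le m<k$ and $c\le(k/m)/4$, i.e.\ $cm\le k/4$; since $cm=\delta m/\beta$ and $\beta\le m$ give $cm\ge\delta$, the last condition forces $\beta$ to be taken fairly close to $m$.

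Assembling the running time, and using (as discussed in Section~\ref{sec:self_reduction}) that the $m$SUM-reporting calls cost only $O(n^{m})$ in total over all recursion levels and are therefore dominated, the bottleneck recursion is
\[
T(k,n,n^{\delta}) \;=\; O\!\big(n^{(k/m-1)(m-\beta)}\big)\cdot T_{\mathrm{lem}}(k/m,\,n^{\beta},\,n^{\delta}),
\]
where by the preceding lemma $T_{\mathrm{lem}}(k/m,n^{\beta},n^{\delta})=(n^{\beta})^{\,k/m-\sqrt{c}\,f(k/m)}=n^{\,\beta k/m-\sqrt{\beta\delta}\,f(k/m)}$ (using $\beta\sqrt{c}=\sqrt{\beta^{2}c}=\sqrt{\beta\delta}$ and $f(k/m)\ge\sqrt{2k/m}-2$). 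Multiplying and simplifying, the total exponent collapses to $k-m+\beta-\sqrt{\beta\delta}\,f(k/m)$, whose dominant term is $-\sqrt{\beta\delta}\,f(k/m)\approx-\sqrt{2\beta\delta k/m}$; this matches the target $-\sqrt{2\delta k}=-\sqrt{\delta}\,\sqrt{2k}$ exactly when $\beta=m$. So the optimization is to pick the integer $m$ and the integer $c$ with $\beta=\delta/c\le m$ that push $\beta/m=\delta/(cm)$ as close to $1$ as integrality permits while keeping $cm\le k/4$; the leftover terms $m-\beta\in[0,1)$ are harmless, and the remaining discrepancy is what the $O(1)$ rounding allowance of ``$f(k)\ge\sqrt{2k}-2$'' (already lowered by one in Section~\ref{sec:self_reduction} for exactly this purpose) must absorb. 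Values of $k$ not divisible by $m$ are handled in the usual way, by running $n$ copies of $(k-1)$SUM.

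The step I expect to be the real obstacle is this integrality juggling. One cannot have $m$ integral, $c=\delta/\beta$ integral, and $\beta=m$ all at once, since that would make $\delta=cm$ an integer; consequently some loss in the ratio $\beta/m$ is unavoidable, and the heart of the argument is to show it is only lower-order. A naive choice such as $m=\lceil\delta\rceil$, $c=1$ already gives a genuinely weaker exponent (off by a $\Theta(\sqrt{k})$ term), so $m$ and $c$ must be chosen jointly; and in the regime where $\delta$ is close to $k/4$, so that $cm\le k/4$ leaves almost no room, one may first have to shrink the problem --- e.g.\ replace $k$ by $k/2$ via the $m=k/2$, $\frac{k}{m}=2$ case of the self-reduction, which reduces to $(k/2)$SUM with the same space budget --- until enough room opens up to invoke the preceding lemma. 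Verifying that some such chain of choices, together with the rounding allowance, always keeps the running time within $O(n^{\,k-\sqrt{\delta}f(k)})$ for $f(k)\ge\sqrt{2k}-2$ is the technical core.
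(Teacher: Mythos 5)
Your overall strategy is the paper's: apply the self-reduction of Theorem~\ref{thm:self_reduction} once with an integer group size $m$, and solve each resulting $(k/m)$SUM sub-instance on $n^{\beta}$ elements with the smaller-space Las Vegas algorithms already established. Your exponent bookkeeping, $k-m+\beta-\sqrt{\beta\delta}\,f(k/m)$, matches the paper's computation exactly. The amusing twist is that the ``naive choice'' you dismiss --- $m=\lceil\delta\rceil$, $\beta=\delta$, $c=1$, sub-instances solved in linear space via the Section~\ref{sec:las_vegas_wang} algorithm --- \emph{is} the paper's proof, verbatim: the paper sets $m=\lceil\delta\rceil$, writes $T(k,n,n^{\delta})=O(n^{(k/\lceil\delta\rceil-1)(\lceil\delta\rceil-\delta)}(n^{\lceil\delta\rceil}+n^{\delta(k/\lceil\delta\rceil)-\delta\sqrt{2k/\lceil\delta\rceil}}))$, and then asserts the stated bound, with no further analysis of the non-integer case beyond the remark that integer $\delta$ recovers the previous lemma and non-integer $\delta$ gives ``similar running time behaviour.''

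Two consequences for your write-up. First, the extra degree of freedom you introduce ($c=\delta/\beta$ an integer greater than $1$, invoking the large-space lemma recursively) buys nothing: the loss is governed by the ratio $\beta/m=\delta/(cm)$, and since $cm$ must be a positive integer with $cm\ge\delta$ (forced by $\beta\le m$), the best achievable ratio is $\delta/\lceil\delta\rceil$ --- exactly what $c=1$, $m=\lceil\delta\rceil$ already attains. So the ``integrality juggling'' you identify as the technical core cannot be won within this framework; e.g.\ for $\delta=1.5$ no choice of $(m,c)$ beats ratio $3/4$. Second, the gap you honestly flag is therefore real but is equally a gap in the paper: the achieved reduction in the exponent is $\sqrt{\delta}\sqrt{2k}\cdot\sqrt{\delta/\lceil\delta\rceil}$ (up to lower-order terms), and $\sqrt{2k}\bigl(1-\sqrt{\delta/\lceil\delta\rceil}\bigr)$ exceeds the slack of $2$ in $f(k)\ge\sqrt{2k}-2$ once $k$ is large and $\delta/\lceil\delta\rceil$ is bounded away from $1$. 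The paper absorbs this silently into ``approximately''; your proposal at least names the problem. In short: same route as the paper, same unresolved loss for non-integer $\delta$; your additional machinery is sound algebraically but does not close that loss, and you need not have treated the $m=\lceil\delta\rceil$, $c=1$ instantiation as inferior to what the paper actually does.
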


\begin{proof}
Recall that the idea of the self-reduction is to split our $k$SUM instance into $k/m$ groups of $m$ arrays. An almost-linear almost-balanced hash function $h$ is applied to all elements. Then, each group is solved recursively and the answers reported by all of these $k/m$ $m$SUM instances form an instance of $(\frac{k}{m})$SUM. This approach leads to the following recursive runtime formula: $T(k,n,n^{\delta}) = n^{(k/m-1)(m-\delta)}(T(m,n,n^{\delta})+T(k/m,n^{\delta},n^{\delta}))$ (see the full details in Section~\ref{sec:self_reduction}). This approach works even for $\delta>1$. It turns out that the best choice of $m$ for $\delta>1$ is $m=\lceil \delta \rceil$, which coincides with our choice of $m$ for $\delta \leq 1$. Following this choice, we have that $T(k,n,n^{\delta}) = n^{(k/\lceil \delta \rceil-1)(\lceil \delta \rceil-\delta)}(T(\lceil \delta \rceil,n,n^{\delta})+T(k/\lceil \delta \rceil,n^{\delta},n^{\delta}))=O(n^{(k/\lceil \delta \rceil-1)(\lceil \delta \rceil-\delta)}(n^{\lceil \delta \rceil}+T(k/\lceil \delta \rceil,n^{\delta},n^{\delta})))$. If we plug in our Las Vegas solution following Wang's approach from Section~\ref{sec:las_vegas_wang}, we get that the running time is approximately $O(n^{(k/\lceil \delta \rceil-1)(\lceil \delta \rceil-\delta)}(n^{\lceil \delta \rceil}+n^{\delta(k/\lceil \delta \rceil) - \delta\sqrt{2k/\lceil \delta \rceil}}))$. Therefore, the running time to solve $k$SUM using $O(n^{\delta})$ space for $\delta>1$ is $O(n^{k-\sqrt{\delta}f(k)})$ for $f(k) \ge \sqrt{2k} - 2$.
\end{proof}

We see that for integer values of $\delta$ the last result coincides with the previous approach. Using the self-reduction approach even for non integer values of $\delta$, we get similar running time behaviour.
We note that using the same ideas from Theorem 3 and the results from Section 4 we can have the same result as Theorem 3 for a deterministic algorithm but with running time which is $O(n^{k-\sqrt{\delta}f(k)})$ for $f(k) \ge \sqrt{k} - 2$.

\section{Space Efficient Solutions to $6$SUM}
In this section we present some space efficient solutions to $6$SUM that demonstrate the usefulness of our general self-reduction in concrete cases. For $3$SUM we do not know any truly subquadratic time solution and for $4$SUM we have an $O(n^2)$ time solution using linear space, which seems to be optimal. Investigating $6$SUM is interesting because in some sense $6$SUM can be viewed as a problem that has some of the flavour of both $3$SUM and $4$SUM, which is related to the fact that $2$ and $3$ are the factors of $6$. Specifically, $6$SUM has a trivial solution running in $O(n^3)$ time using $O(n^3)$ space. However, when only $O(n)$ space is allowed $6$SUM can be solved in $O(n^4)$ time by Wang's algorithm. More generally, using Wang's solution $6$SUM can be solved using $O(n^{\delta})$ space in $O(n^{5-\delta})$ time for any $\delta \leq 1$. As one can see, on the one hand $6$SUM can be solved in $O(n^3)$ time that seems to be optimal, which is similar to $4$SUM. On the other hand, when using at most linear space no $O(n^{4-\epsilon})$ solution is known for any $\epsilon>0$, which has some flavour of the hardness of $3$SUM.

There are two interesting questions following this situation:
(i) Can $6$SUM be solved in $O(n^3)$ time using less space than $O(n^3)$?
(ii) Can $6$SUM be solved in $O(n^4)$ time using truly sublinear space?. Using our techniques we provide a positive answer to both questions.

We begin with an algorithm that answer the first question and obtain the following result:

\setcounter{theorem}{6}

\begin{theorem}
There is a Las Vegas algorithm that solves $6$SUM and runs in $O(n^{5-\delta}+n^3)$ time using $O(n^{\delta})$ space, for any $\delta \geq 0.5$.
\end{theorem}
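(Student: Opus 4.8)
The goal is to solve $6$SUM in $O(n^{5-\delta}+n^3)$ time and $O(n^{\delta})$ space for $\delta\ge 0.5$. Since $6 = 2\cdot 3$, the natural move is to apply the self-reduction of Theorem~\ref{thm:self_reduction} with $k=6$ and $m=3$: partition the six arrays into $k/m=2$ groups of $m=3$ arrays each. With hash range $[n^{m-\delta}]=[n^{3-\delta}]$ we obtain $O(n^{(k/m-1)(m-\delta)})=O(n^{3-\delta})$ instances, where for each choice of target $t_1\in[n^{3-\delta}]$ we solve one $3$SUM reporting instance on the first group (with $n$ integers per array) and, since $t_2$ is forced, one $3$SUM reporting instance on the second group, then combine the $O(n^{\delta})$ surviving sums from each group (after the removing-duplicate-sums process) into a $\frac{k}{m}$SUM $=2$SUM instance on two arrays of size $O(n^{\delta})$.

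Carrying this out step by step: first, choose an almost-linear almost-balanced $h:[u]\to[n^{3-\delta}]$ and hash all elements. Second, for each $t_1$, run $3$SUM reporting on group $G_1$ with target $t_1$ and on $G_2$ with target $h(t)-t_1$; by the \textbf{removing duplicate sums} process and the almost-balanced property, each list $L_i$ holds $O(n^\delta)$ tuples with distinct original sums, so $B_1,B_2$ each have $O(n^{\delta})$ entries. Third, solve the resulting $2$SUM instance in $\tilde O(n^{\delta})$ time and $O(n^{\delta})$ space by the base-case routine from the Lemma (sort one array, binary-search the other). The cost per value of $t_1$ is the cost of $3$SUM reporting on $n$ numbers plus $\tilde O(n^\delta)$; the classic space-$S$ algorithm solves $3$SUM reporting in $O(n^{3-\delta}\cdot\text{(something)})$ — more precisely, using Wang's / Lincoln et al.'s $\tilde O(n^{2-\delta})$-time, $O(n^{\delta})$-space $3$SUM we can report all solutions in roughly $O(n^2)$ time (the output size of a single $3$SUM reporting instance is $O(n^2)$, which dominates and does not hurt us since $m=3\le k/2$). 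Multiplying by the $O(n^{3-\delta})$ choices of $t_1$ gives $O(n^{3-\delta}\cdot n^2)=O(n^{5-\delta})$; adding the final combination step over all $t_1$ contributes $\tilde O(n^{3-\delta}\cdot n^{\delta})=\tilde O(n^3)$. Hence total time $O(n^{5-\delta}+n^3)$ and space $O(n^{\delta})$, valid once $n^{\delta}$ is large enough to store $B_1,B_2$ — i.e. any $\delta\ge 0$, but the stated regime $\delta\ge 0.5$ is where this beats alternatives.

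The main obstacle I anticipate is pinning down the per-instance cost of $3$SUM reporting in $O(n^{\delta})$ space and arguing that its output size (up to $O(n^2)$ tuples per value of $t_1$) does not blow up the running time — this is exactly the second of the three emphasized points after the self-reduction (the reporting version and output size), and one must invoke that the per-level reporting cost is $O(n^m)=O(n^3)$ in aggregate and is subsumed. One must also handle the Las Vegas vs.\ expectation subtlety: the $O(n^\delta)$ bound on $|L_i|$ is in expectation over $h$, so the algorithm verifies, re-draws $h$ if a bucket overflows (the \emph{balanced} assumption from Appendix~\ref{apx:balanced}), giving a Las Vegas guarantee. Finally one should remark that the same skeleton with $m=3$, $\delta=2$ yields the companion $O(n^3)$-time, $O(n^2)$-space result, and with $m=2$ on $6$SUM one gets the $O(n^4)$-time, $O(n^{2/3})$-space bound promised in the introduction — but those are separate statements; here I would prove only the displayed claim.
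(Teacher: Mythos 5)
Your proof follows essentially the same route as the paper's: split the six arrays into two groups of three, hash with range $[n^{3-\delta}]$, for each $t_1$ solve two $3$SUM reporting instances (targets $t_1$ and $h(t)-t_1$) with the removing-duplicate-sums process, and finish with a $2$SUM instance on $O(n^{\delta})$ elements, giving $O(n^{3-\delta}(n^{2}+n^{\delta}))=O(n^{5-\delta}+n^{3})$ time. One correction to an aside in your argument: the restriction $\delta\ge 0.5$ is not just ``where this beats alternatives'' and the bound does not extend to all $\delta\ge 0$ --- the $3$SUM reporting subroutine you run in $O(n^{2}+op)$ time is Wang's, which uses $\tilde{O}(\sqrt{n})$ space (there is no $\tilde{O}(n^{2-\delta})$-time, $O(n^{\delta})$-space $3$SUM algorithm as you write; that would be subquadratic $3$SUM), so $\tilde{O}(\sqrt{n})$ is the space bottleneck of each iteration and fits the $O(n^{\delta})$ budget only when $\delta\ge 0.5$.
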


\begin{proof}
Given an instance of $6$SUM with arrays $A_1,A_2,...,A_6$ and a target number $t$, we split the arrays into two groups $G_1$ with arrays $A_1,A_2,A_3$ and $G_2$ with arrays $A_4,A_5,A_6$. We pick an almost-balanced almost-linear hash function $h$ that its range is $n^{3-\delta}$ for some $\delta \geq 0$. Then, we apply $h$ to all elements in all input arrays. For each value $t_1 \in [n^{3-\delta}]$, we have two instances of 3SUM to solve. The first contains the arrays in $G_1$ and $t_1$ as a target value, and the second contains the arrays in $G_2$ and $h(t)-t_1$ as the target value. We adapt the 3SUM algorithm such that it returns all 3-tuples that sum up to the target value and not just return if there is such tuple or not. This can be done in $O(n^2 + op)$ time, where $op$ is the output size, using linear space or even $\tilde{O}(\sqrt{n})$ space by using the algorithm of Wang~\cite{Wang14}. From the returned output we remove all duplicate sums, so the expected size of the output (after removing duplicates) is $O(n^{\delta})$ as $h$ is an almost-balanced hash function. Therefore, for each value of $t_1$ we have two arrays that each of them contains $O(n^{\delta})$ elements. These two arrays (when considering the sums of the returned tuples) form an instance of $2$SUM that its target is $t$. If there is a solution to this instance then we have a solution to our $6$SUM instance, if not we continue to check the next value of $t_1$. If no solution is found for any of the $n^{3-\delta}$ instances of $2$SUM then there is no solution to the original $6$SUM instance.

It easy to verify that the algorithm uses $O(n^{\delta})$ space for $\delta>0.5$. Regarding the running time, there are $n^{3-\delta}$ possible values of $t_1$ and for each one of them we solve two instances of $3$SUM that contain $O(n)$ elements, and one instance of $2$SUM that contains $O(n^{\delta})$ elements. Therefore, the total running time of this solution is $O(n^{3-\delta}(n^2+n^{\delta}))=O(n^{5-\delta}+n^3)$ for $\delta>0.5$ (we note that the total output size from all the $3$SUM instances is $O(n^3)$ which adds just a constant factor to the running time).
\end{proof}

By the last theorem we get a tradeoff between time and space which demonstrates in one extreme that $6$SUM can be solved in $O(n^3)$ time using $O(n^2)$ space instead of the $O(n^3)$ space of the trivial solution.

The algorithm from the previous theorem runs in $O(n^4)$ time while using $O(n)$ space, this is exactly the complexity of Wang's algorithm for $6$SUM. We now present an algorithm that runs in $O(n^4)$ time but uses truly sublinear space.

\begin{theorem}
There is a Las Vegas algorithm that solves $6$SUM and runs in $O(n^{6-3\delta}+n^4)$ time using $O(n^{\delta})$ space, for any $\delta \geq 0$.
\end{theorem}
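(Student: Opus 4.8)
The plan is to mimic the self-reduction of Theorem~\ref{thm:self_reduction} with $k=6$, $m=3$, but to push the space budget below linear and to use a space-efficient $3$SUM subroutine at the bottom. Given the arrays $A_1,\dots,A_6$ and target $t$, I would split into two groups $G_1=\{A_1,A_2,A_3\}$ and $G_2=\{A_4,A_5,A_6\}$, pick an almost-linear almost-balanced hash function $h$ with range $[n^{3-\delta}]$ (so that each bucket of $3$-tuples is expected to have $O(n^{\delta})$ surviving sums after the \textbf{removing duplicate sums} process), apply $h$ to all input elements, and iterate over the $n^{3-\delta}$ choices of $t_1\in[n^{3-\delta}]$, with the second group's target forced to $h(t)-t_1$. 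For each $t_1$ the two $3$SUM-reporting instances are solved by Wang's $\tilde O(\sqrt n)$-space $3$SUM algorithm, which is compatible with the \textbf{paused reporting} process: we extract the $3$-tuples whose hashed sum hits the target one block of $O(n^{\delta})$ at a time, and the two resulting lists (with duplicate sums removed, so each has $O(n^{\delta})$ entries in expectation) form a $2$SUM instance on $O(n^{\delta})$ numbers, solvable in $\tilde O(n^{\delta})$ time and $O(n^{\delta})$ space. Correctness is exactly as in Theorem~\ref{thm:self_reduction}: linearity of $h$ makes "hashed sum equals $t_1$" equivalent to the componentwise condition, false positives are killed by checking original values in the final $2$SUM step, and the \textbf{small space of fixed group} remark guarantees the fixed group $G_2$ also yields only $O(n^{\delta})$ survivors per bucket.

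For the running time: there are $n^{3-\delta}$ outer iterations; each one runs two $3$SUM-reporting calls costing $\tilde O(n^2)$ plus output, and one $2$SUM call costing $\tilde O(n^{\delta})$. The output of all $3$SUM-reporting calls over the whole execution is bounded by the number of $3$-tuples, i.e.\ $O(n^3)$ in each group, which adds only a constant factor overall. Hence the total time is $O\bigl(n^{3-\delta}(n^2+n^{\delta})\bigr)=O(n^{5-\delta}+n^3)$ — wait, that is the \emph{previous} theorem. To get $O(n^{6-3\delta}+n^4)$ I instead recurse: the point of the new theorem is to use $O(n^{\delta})$ space with $\delta$ possibly below $1/2$, where Wang's $3$SUM routine no longer fits. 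So I would solve each $3$SUM-reporting instance \emph{itself} by a self-reduction with $m=2$: hash the three arrays with range $[n^{2-\delta}]$, split off one array, loop over the $n^{2-\delta}$ bucket values, solve the induced $2$SUM on two lists of $O(n^{\delta})$ numbers (base case, $\tilde O(n^{\delta})$ time, $O(n^{\delta})$ space), again under paused reporting and duplicate-sum removal. This solves $3$SUM-reporting in $O(n^{2-\delta}\cdot n^{\delta})=\tilde O(n^2)$ time when $\delta\le 1$, but crucially in only $O(n^{\delta})$ space for \emph{any} $\delta\ge 0$.

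Plugging this space-efficient $3$SUM subroutine into the outer loop gives time $O\bigl(n^{3-\delta}\cdot(\,n^{2-\delta}\cdot n^{\delta} + n^{\delta}\,)\bigr)$; the $n^{2-\delta}\cdot n^{\delta}=n^2$ term dominates for small $\delta$, but the honest accounting must track the \emph{total} number of (bucket, tuple) pairs processed across both recursion levels: the outer loop contributes a factor $n^{3-\delta}$, the inner $3$SUM self-reduction contributes a further factor $n^{2-\delta}$ for the per-instance bucket loop, and the base-case $2$SUM contributes $n^{\delta}$, for $n^{3-\delta}\cdot n^{2-\delta}\cdot n^{\delta}=n^{5-\delta}$; to reach $n^{6-3\delta}$ one replaces the inner $3$SUM not by an $\tilde O(n^2)$-time routine but by the direct $m=1$ self-reduction of the whole $6$SUM instance, i.e.\ peel the six arrays three at a time with hash range $[n^{1-\delta}]$ per peeled element. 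I will organise the bookkeeping as $T(n,n^\delta)=n^{(6/m-1)(m-\delta)}\bigl(T_{m\mathrm{SUM}}(n,n^\delta)+T_{(6/m)\mathrm{SUM}}(n^\delta,n^\delta)\bigr)$ from Theorem~\ref{thm:self_reduction} with the groupings chosen so the exponents collapse to $6-3\delta$, the $n^4$ additive term coming from the $O(n^3)$-per-group reporting output multiplied by the unavoidable $n$ factor when $\delta$ is small. The main obstacle I anticipate is precisely this exponent bookkeeping: ensuring that across all recursion levels the \textbf{removing duplicate sums}, \textbf{paused reporting}, and \textbf{small space of fixed group} mechanisms interact so that (a) no level ever stores more than $O(n^{\delta})$ items at once, and (b) duplicate sums introduced by the lower-level hash functions remain $O(1)$ in expectation — which holds because, as noted after the self-reduction lemma, each upper-level hash range ($\ge n^{m}$) exceeds the number of $m$-tuples below it, so those spurious duplicates do not inflate the running time.
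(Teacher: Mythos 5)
Your proposal does not reach the stated bound, and the place where it goes wrong is the very first step: the decomposition. You split the six arrays into \emph{two groups of three} and hash with range $[n^{3-\delta}]$, which forces you to solve $3$SUM-reporting on full-size ($n$-element) arrays in each of the $n^{3-\delta}$ outer iterations. Since no subroutine solves $3$SUM in $o(n^2)$ time, this route costs at least $\Omega(n^{3-\delta}\cdot n^2)=\Omega(n^{5-\delta})$ --- exactly the \emph{previous} theorem's bound, as you notice yourself --- and $n^{5-\delta}$ exceeds the claimed $n^4$ term for every $\delta<1$ (at $\delta=2/3$ it is $n^{4.33}$, not $n^4$). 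Your attempted repairs do not escape this: replacing the inner $3$SUM by a recursive $m=2$ self-reduction still gives $n^{3-\delta}\cdot n^2=n^{5-\delta}$ by your own accounting, and the final suggestion to ``peel the six arrays three at a time with hash range $[n^{1-\delta}]$'' is not a well-defined instantiation of Theorem~\ref{thm:self_reduction} (the $m=1$ and $m=3$ choices give $n^{6-5\delta}+n^{5-\delta}$ and $n^{5-\delta}+n^3$ respectively, neither of which collapses to $6-3\delta$). The explanation you offer for the $n^4$ additive term (output size times ``an unavoidable $n$ factor'') is also not where that term comes from.

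The paper's proof uses a different decomposition that you never consider: \emph{three groups of two arrays} each, i.e.\ the self-reduction with $m=2$. One hashes with range $[n^{2-\delta}]$, iterates over the $n^{4-2\delta}$ pairs $(t_1,t_2)$ with the third target forced to $h(t)-t_1-t_2$, solves three $2$SUM-reporting instances per pair --- each in $O(n^{2-\delta})$ time and $O(n^{\delta})$ space, with duplicate sums removed so each bucket holds $O(n^{\delta})$ survivors --- and then runs $3$SUM on the three resulting $O(n^{\delta})$-element arrays in $O(n^{2\delta})$ time. This gives $O\bigl(n^{4-2\delta}(n^{2-\delta}+n^{2\delta})\bigr)=O(n^{6-3\delta}+n^4)$; the $n^4$ term is $n^{4-2\delta}\cdot n^{2\delta}$, the aggregate cost of the small inner $3$SUM instances. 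The key structural point you missed is that to beat $n^{5-\delta}$ one must make the \emph{expensive, full-size} subproblem a $2$SUM (solvable in $n^{2-\delta}$ time at $n^{\delta}$ space) rather than a $3$SUM, and pay for it with more outer iterations; your general recurrence $T(n,n^{\delta})=n^{(6/m-1)(m-\delta)}\bigl(T_{m\mathrm{SUM}}(n,n^{\delta})+T_{(6/m)\mathrm{SUM}}(n^{\delta},n^{\delta})\bigr)$ does yield the theorem, but only at $m=2$, a value you never commit to.
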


\begin{proof}
As an input the algorithm is given an instance of $6$SUM with arrays $A_1,A_2,...,A_6$ and a target number $t$. We split the arrays into three groups $G'_1$ with arrays $A_1,A_2$, $G'_2$ with arrays $A_3,A_4$ and $G'_3$ with arrays $A_5,A_6$. We pick an almost-balanced almost-linear hash function $h$ with a range that is $n^{2-\delta}$ for some $\delta \geq 0$. The hash function $h$ is applied to all elements in all arrays. For each value of $t_1,t_2 \in [n^{2-\delta}]$ we create three instances of $2$SUM. Each instance is composed of a different group from $G'_1,G'_2$ and $G'_3$, while the target number is $t_1,t_2$ and $h(t)-t_1-t_2$, respectively. $2$SUM can be solved using $O(n^{\delta})$ space in $O(n^{2-\delta})$ time as explained in Section~\ref{sec:self_reduction}. We note that all duplicate sums are removed and the \emph{total} output size for \emph{all} $2$SUM instances is $O(n^2)$, so returning the output does not add a significant cost to our total running time. Each of the $2$SUM instances is expected to return $O(n^{\delta})$ answers (3-tuples) because of the almost-balanced property of $h$ (after removing duplicate sums). We create three arrays, one for each of the $O(n^{\delta})$ 3-tuples, that contain the sums of these tuples. These arrays form an instance of $3$SUM with $O(n^{\delta})$ elements and the target value $t$. If any of the $3$SUM instances for all possible values of $t_1$ and $t_2$ returns a positive answer then we have a solution, otherwise there is no solution to our $6$SUM instance.

We analyse the running time of the algorithm. For each of the $n^{4-2\delta}$ possible values of $t_1$ and $t_2$, the algorithm solve three instances of $2$SUM with $O(n)$ elements (with space usage limited to $O(n^{\delta})$) and a single instance of $3$SUM with $O(n^{\delta})$ elements. Therefore, the total running time is $O(n^{4-2\delta}(n^{2-\delta}+n^{2\delta}))=O(n^{6-3\delta}+n^4)$. The total space complexity is $O(n^{\delta})$.
\end{proof}

By setting $\delta=2/3$ in the last theorem, we have an algorithm that solves $6$SUM in $O(n^4)$ time while using just $O(n^{2/3})$ space.

\section{Improved Space-Efficient Solution to 3SUM on Random Input}\label{sec:3sum_random_input}
In this section we present another application of our general self-reduction method. Recent research work by Bansal et al.~\cite{BGNV17} with the motivation of providing a space efficient solution to SUBSET-SUM, presented an efficient algorithm to the List Disjointness problem using small space. The List Disjointness problem is closely related to $2$SUM, so actually their algorithm also provide an efficient solution to $2$SUM using small space. However, it is worth noting that their solution is based on the strong assumption that we have a random read-only access to polynomially many random bits. With their algorithm in mind, it is interesting to investigate the implication of an improved solution to $2$SUM on the complexity of $k$SUM for other values of $k$, where the case of $k=3$ is especially interesting.

Let us assume that there is an algorithm that solves $2$SUM using $O(n^{\delta})$ space ($0 \leq \delta \leq 1$) and runs in $O(n^{f(\delta)})$ time for some $f$ which is a function of $\delta$. Moreover, let us assume that this algorithm can be adapted to solve the reporting version of $2$SUM

We show how to solve $3$SUM using our techniques, such that an improved solution to $2$SUM implies an improved solution to $3$SUM. This is done in the following theorem:

\setcounter{theorem}{8}

\begin{theorem}
Assuming there is an algorithm that solve $2$SUM reporting using $O(n^{\delta})$ space in $O(n^{f(\delta)})$ time, $3$SUM can be solved using $O(n^{\delta})$ space and $O(n^{1-\delta+f(\delta)}+n^2)$ time.
\end{theorem}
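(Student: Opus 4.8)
The plan is to instantiate the general self-reduction of Section~\ref{sec:self_reduction} with $k=3$ in its ``$k$ not a multiple of $m$'' form, using the assumed $2$SUM-reporting algorithm to handle the large group and a trivial bucket scan for the forced group. Given $A_1,A_2,A_3$ and target $t$, I would split the arrays into a large group $G_1=\{A_1,A_2\}$ and a small group $G_2=\{A_3\}$; since the smallest group has one array, I pick an almost-linear almost-balanced hash function $h:[u]\to[n^{1-\delta}]$ and apply it to every element of $A_1,A_2,A_3$. The self-reduction then iterates over the $n^{1-\delta}$ choices of a target $t_1\in[n^{1-\delta}]$ for $G_1$, while the target of $G_2$ is forced to $h(t)-t_1$ so that the hashed sums add up to $h(t)$ (with the usual $O(1)$ offset bookkeeping for almost-linearity, as in Section~\ref{sec:preliminaries}).

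For each value of $t_1$ I would do the following. First, build $A_3^{t_1}=\{a_3\in A_3: h(a_3)=h(t)-t_1\}$ by a single scan of $A_3$ ($O(n)$ time) and store it, keyed by original value, in a dynamic dictionary; because $h$ is balanced, $|A_3^{t_1}|=O(n^{\delta})$, so this fits in $O(n^{\delta})$ space. Second, invoke the assumed $2$SUM-reporting algorithm on the hashed arrays $h(A_1),h(A_2)$ with target $t_1$, but \emph{without} materializing its output: using the \textbf{paused reporting} mechanism, I receive the reported pairs $(a_1,a_2)$ one at a time in $O(n^{\delta})$ working space, and for each one compute $s=a_1+a_2$ on the \emph{original} values and query whether $t-s\in A_3^{t_1}$; if so, the original $3$SUM instance has a solution and I halt. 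This last part is exactly the ``$\lceil k/m\rceil$SUM'' step of the self-reduction: the array $B_1$ of group sums is streamed rather than stored, and $B_2=A_3^{t_1}$ is the (genuinely small) forced group, so the \textbf{small space of fixed group} observation applies.

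Correctness is immediate: if $a_1+a_2+a_3=t$, then for $t_1:=h(a_1)+h(a_2)$ we have, by linearity mod $n^{1-\delta}$, $h(a_3)=h(t)-t_1$, hence $a_3\in A_3^{t_1}$ and the pair $(a_1,a_2)$ is reported, so $t-s=a_3$ is found; conversely any found triple genuinely sums to $t$, and the three elements come from distinct arrays by construction. For the resources: the algorithm uses $O(n^{\delta})$ space for the dictionary plus the $O(n^{\delta})$ working space of the $2$SUM algorithm plus $O(1)$ for the current pair. For time, there are $n^{1-\delta}$ iterations; each performs one $2$SUM-reporting call costing $O(n^{f(\delta)})$ plus time linear in its output, one scan of $A_3$ costing $O(n)$, and $O(1)$ (or $O(\log n)$, absorbed, or $O(1)$ via perfect hashing) per reported pair. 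Summing, the reporting calls contribute $O(n^{1-\delta+f(\delta)})$, the scans contribute $O(n^{2-\delta})=O(n^2)$, and since each pair $(a_1,a_2)$ lands in exactly one bucket the total number of reported pairs over all iterations is $n^2$, contributing $O(n^2)$; altogether $O(n^{1-\delta+f(\delta)}+n^2)$ time.

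The step I expect to be the main obstacle is precisely the one that forces the streaming design: the per-bucket output of the $2$SUM-reporting call can be as large as $\Theta(n^{1+\delta})$, far exceeding the $O(n^{\delta})$ space budget, so $B_1$ cannot be stored and one must verify that the assumed algorithm's working space stays $O(n^{\delta})$ throughout a potentially long reporting phase (this is what ``paused reporting'' buys us) and that the lookups against $A_3^{t_1}$ are charged to the global $O(n^2)$ term rather than multiplied into the reporting term. A secondary subtlety, handled by taking $G_2=\{A_3\}$ as the forced group, is that only that group's reported set needs to be retained; duplicate sums never need to be removed here because we anyway traverse every pair exactly once across the buckets.
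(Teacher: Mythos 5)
Your proposal matches the paper's proof in all essentials: the same split into $G_1=\{A_1,A_2\}$ and the forced singleton group $G_2=\{A_3\}$, the same hash range $n^{1-\delta}$, the same use of paused reporting to stream the $2$SUM-reporting output within $O(n^{\delta})$ space, and the same accounting giving $O(n^{1-\delta+f(\delta)}+n^2)$. The only cosmetic difference is that the paper batches the streamed pairs into arrays of size $O(n^{\delta})$ and solves small $2$SUM instances against $B_1$, whereas you query a dictionary pair-by-pair; the cost analysis is identical.
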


\begin{proof}
As an input the algorithm is given an instance of $3$SUM with arrays $A_1,A_2,A_3$ and a target number $t$. We split the input into two groups $G_1$ that contains $A_1$ and $A_2$ and $G_2$ that contains $A_3$. As usual, we pick an almost-balanced almost-linear hash function $h$, but this time its range is $n^{1-\delta}$. The hash function $h$ is applied to all elements in all arrays. For each possible value of $t_1 \in [n^{1-\delta}]$, we can get all the elements in $G_2$ that their hashed value equals $h(t)-t_1$. The expected number of such elements is $O(n^{\delta})$ because $h$ is almost balanced. Note that duplicate elements are removed. Group $G_1$ forms an instance of $2$SUM with the target number $t_1$. We use the algorithm solving $2$SUM in order to retrieve all pairs of elements (one from $A_1$ and the other form $A_2$) that their hashed sum is $t_1$. At first glance the expected number of pairs returned is $O(n^{1+\delta})$ as $h$ is an almost-balanced hash function, but as there can be many pairs of elements that their sum (using the original values) is the equal the number of returned pairs can be much larger. However, we can use our "paused reporting" technique from Section~\ref{sec:self_reduction} in order to get just $O(n^{\delta})$ portion of the output each time. As the total number of answers returned by all $2$SUM instances, for the different values of $t_1$, is $O(n^2)$ no matter how many duplicate sums we have, the number of times we have to run the $2$SUM algorithm is $O(n^{1-\delta})$ anyway. What is important following our "small space of fixed group" observation from Section~\ref{sec:self_reduction} is that the elements we have from group $G_2$ is limited in expectation to $O(n^{\delta})$ after removing duplicates (which can be done for $G_2$). That being said, for each value of $t_1$ we have $O(n^{\delta})$ elements from group $G_2$ that their hashed value is $h(t)-t_1$, which we save in array $B_1$, and we can also get $O(n^{\delta})$ pairs of elements from group $G_1$ that their hashed sum is $t_1$. For these pairs, we save in array $B_2$ the sum of elements within the pair. The arrays $B_1$ and $B_2$ forms an instance of $2$SUM with target value $t$. There is a solution to our $3$SUM instance iff there is a solution to one of the $O(n^{2-\delta})$ $2$SUM instances on $O(n^{\delta})$ elements.

The space complexity of the algorithm is $O(n^{\delta})$, as each of the $2$SUM instances we solve does not consume more than $O(n^{\delta})$ space. Regarding the running time we run $O(n^{1-\delta})$ instances of $2$SUM with $O(n)$ elements (using the paused reporting technique) and another $O(n^{2-\delta})$ instances of $2$SUM with $O(n^{\delta})$ elements. Therefore the total running time to solve $3$SUM using $O(n^{\delta})$ space is $O(n^{1-\delta}n^{f(\delta)} + n^{2-\delta}n^\delta)=O(n^{1-\delta+f(\delta)}+n^2)$.
\end{proof}

The same ideas can be used to obtain the following results for $4$SUM, $6$SUM and $8$SUM:

\begin{theorem}
Assuming there is an algorithm that solve $2$SUM reporting using $O(n^{\delta})$ space in $O(n^{f(\delta)})$ time, $4$SUM, $6$SUM and $8$SUM can be solved using $O(n^{\delta})$ space and $O(n^{2-\delta+f(\delta)}+n^2)$, $O(n^{4-2\delta+f(\delta)}+n^4)$ and $O(n^{6-3\delta+f(\delta)}+n^{6-\delta})$ time, respectively.
\end{theorem}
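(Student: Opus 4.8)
The plan is to mimic the structure of the proof of Theorem~9 (the $3$SUM case) and its generalization idea, splitting each of $4$SUM, $6$SUM, and $8$SUM into groups so that the ``fixed'' group can be made small while the remaining groups feed a single recursive $k$SUM instance on $O(n^\delta)$-sized arrays. The common skeleton is: pick an almost-linear almost-balanced hash function $h$, guess the hash values of all-but-one of the groups, recover the matching elements of the fixed group (of which there are $O(n^\delta)$ in expectation after removing duplicate sums, by the \textbf{small space of fixed group} observation), run $2$SUM reporting (with the \textbf{paused reporting} process) inside the non-fixed groups to obtain their candidate sums, and then solve a small instance to glue everything together. The running time will always be the number of guessed hash-value tuples times (the cost of the $2$SUM reporting calls plus the cost of the final small instance), and the additive $n^{c}$ terms come from the total output size of all reporting calls, which never exceeds the trivial count of tuples in a group.

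For $4$SUM I would split into two groups $G_1=\{A_1,A_2\}$, $G_2=\{A_3,A_4\}$, use $h$ with range $n^{2-\delta}$, guess $t_1\in[n^{2-\delta}]$ so that $G_2$'s target is $h(t)-t_1$; both $G_1$ and $G_2$ are $2$SUM instances, one solved by paused reporting to yield $O(n^\delta)$ candidate sums and the other (the fixed group) yielding $O(n^\delta)$ elements after duplicate removal, and the glue step is a $2$SUM on $O(n^\delta)$ elements. This costs $O(n^{2-\delta})$ guesses times $O(n^{f(\delta)}+n^\delta)$, i.e.\ $O(n^{2-\delta+f(\delta)}+n^2)$. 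For $6$SUM I would split into three groups $\{A_1,A_2\},\{A_3,A_4\},\{A_5,A_6\}$ with $h$ of range $n^{2-\delta}$, guess $(t_1,t_2)\in[n^{2-\delta}]^2$ ($O(n^{4-2\delta})$ guesses), solve two of the groups via $2$SUM reporting and keep the third as the fixed group with $O(n^\delta)$ surviving elements; the glue step is a $3$SUM on $O(n^\delta)$ elements, solvable in $O(n^{2\delta})$ time (or by recursion), giving $O(n^{4-2\delta})(O(n^{f(\delta)})+O(n^{2\delta}))=O(n^{4-2\delta+f(\delta)}+n^4)$. For $8$SUM I would split into four groups of two arrays each with $h$ of range $n^{2-\delta}$, guess $(t_1,t_2,t_3)$ ($O(n^{6-3\delta})$ guesses), do three $2$SUM reporting calls and keep the fourth group fixed, and glue with a $4$SUM instance on $O(n^\delta)$ elements, which takes $O(n^{2\delta})$ time by the easy $4$SUM algorithm; total $O(n^{6-3\delta})(O(n^{f(\delta)})+O(n^{2\delta}))=O(n^{6-3\delta+f(\delta)}+n^{6-\delta})$.

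The main obstacle, as in Theorem~9, is controlling the blow-up from \emph{duplicate sums}: an almost-balanced $h$ only guarantees $O(n^\delta)$ \emph{distinct} sums per bucket, so in the non-fixed groups we cannot bound the per-bucket output and must invoke paused reporting, relying on the fact that the \emph{total} output over all guesses is bounded by the number of tuples in the group ($O(n^2)$ for a two-array group). One must also argue that duplicate sums introduced by the hash function itself do not accumulate across the (single, fixed-depth) level of recursion here — this is immediate since there is only one hashing step. Everything else is routine accounting: verifying that each reported tuple is checked against the fixed group's stored elements in $O(1)$ time, that duplicate removal in the fixed group can be done within $O(n^\delta)$ space via a balanced search tree, and that the additive polynomial terms are dominated as claimed. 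I would present all three cases in parallel, factoring out the shared lemma-style argument so the write-up is short.
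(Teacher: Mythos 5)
Your proposal matches the paper's proof essentially verbatim: the same partition into groups of two arrays, the same hash range $n^{2-\delta}$, the same enumeration of $O(n^{2-\delta})$, $O(n^{4-2\delta})$, and $O(n^{6-3\delta})$ target tuples, the same $2$SUM-reporting calls per group followed by a small $2$SUM/$3$SUM/$4$SUM glue instance on $O(n^{\delta})$ elements, and the same time accounting. The additional care you take with duplicate sums, paused reporting, and the fixed group is exactly the machinery the paper imports from its self-reduction section, so there is nothing to flag.
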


\begin{proof}
For $4$SUM, we partition the 4 input arrays into two groups of two arrays. We apply an almost-linear almost-balanced hash function hash function $h$ that its range is $n^{2-\delta}$ to all elements. For each value of $t_1 \in [n^{2-\delta}]$ we solve two instances of $2$SUM, each one of them for a different group with different target values $t_1$ and $h(t)-t_1$. We get $O(n^{2-\delta})$ answers in expectation (after removing duplicate sums) from each of these instances. Then, we solve a $2$SUM instance on these two arrays of $O(n^{\delta})$ elements. There is a solution to the $4$SUM instance iff there is a solution to at least one of the $2$SUM instances on $O(n^{\delta})$ elements. The space complexity is obviously $O(n^{\delta})$. The total running time is $O(n^{2-\delta}(n^{f(\delta)}+n^{\delta}))=O(n^{2-\delta+f(\delta)}+n^2)$.

For $6$SUM, we split the input arrays to 3 groups of 2 arrays. We apply the same $h$ as for $4$SUM to all elements. For each value of $t_1,t_2 \in [n^{2-\delta}]$ we have three $2$SUM instances to solve. On the output of these instances we run $3$SUM algorithm.  There is a solution to the $6$SUM instance iff there is a solution to at least one of the $3$SUM instances. The space complexity is obviously $O(n^{\delta})$. The total running time is $O(n^{4-2\delta}(n^{f(\delta)}+n^{2\delta}))=O(n^{4-2\delta+f(\delta)}+n^4)$.

For $8$SUM, we split the input arrays to 4 groups of 2 arrays. Again, we apply the same $h$ as for $4$SUM to all elements. For each value of $t_1,t_2,t_3 \in [n^{2-\delta}]$ we have four $2$SUM instances to solve. On the output of these instances we run $4$SUM algorithm.  There is a solution to the $8$SUM instance iff there is a solution to at least one of the $4$SUM instances. The space complexity is $O(n^{\delta})$. The total running time is $O(n^{6-3\delta}(n^{f(\delta)}+n^{2\delta}))=O(n^{6-3\delta+f(\delta)}+n^{6-\delta})$.
\end{proof}

In the same way similar results can be obtained for other values of $k$.

Bansal et al.~\cite{BGNV17} presented an randomized algorithm that solves List Distinctness and therefore also $2$SUM in $\tilde{O}(n\sqrt{p/s})$ time using $O(s\log{n})$ space, under the assumption that there is a random read-only access to a random function $h : [m] \rightarrow [n]$ ($m$ is the universe of the elements). The value of $p$ measures the number of \emph{pseudo-solutions} in both input arrays, informally speaking, it is a measure of how many pairs of indices there are in each array that have the same value (see~\cite{BGNV17} for full details). In case we have a random input then $p=\Theta(n)$, if the universe size is at least $n$. Using this algorithm Bansal et al. proved that $k$SUM on random input can be solved using $\tilde{O}(n^{k-0.5})$ time and $O(\log{n})$ space, for $k \geq 2$. For even $k$ under the assumption that there is a random read-only access to random bits, they present a solution to $k$SUM on random input that runs in $\tilde{O}(n^{3k/4}\sqrt{s})$ time and $O(s\log{n})$ space.

The algorithm of Bansal et al.~\cite{BGNV17} that solves List Distinctness and also $2$SUM can be adapted in a straightforward manner to an algorithm that reports all solutions in $\tilde{O}(n\sqrt{p/s}+op)$ time (where $op$ is the size of the output) using $O(s\log{n})$ space, under the assumption that there is a random read-only access to a random function $h : [m] \rightarrow [n]$. The intuition behind this is that if we have many answers to report on then the task of reporting on the first ones is easier than finding just one solution, in case it is the only one that exists. For random input we have that $p=\Theta(n)$. Therefore, the running time of the algorithm solving $2$SUM reporting is $O(n^{1.5-0.5\delta}+op)$ and it uses $\tilde{O}(n^{\delta})$ space. Consequently, following our notation, we have that $f(\delta)=1.5-0.5\delta$.

Using this algorithm for $2$SUM we have the following corollaries of our previous theorems:

\begin{corollary}\label{cor:3sum_by_2sum}
There is a Monte Carlo randomized algorithm that solves $3$SUM on random input using $\tilde{O}(n^{\delta})$ space and $\tilde{O}(n^{2.5-1.5\delta}+n^2)$ time, under the assumption that there is a random read-only access to a random function $h : [m] \rightarrow [n]$.
\end{corollary}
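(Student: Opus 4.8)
The plan is to obtain this corollary as a direct instantiation of the $3$SUM-to-$2$SUM-reporting reduction proved above, with the reporting variant of the List Disjointness algorithm of Bansal et al.~\cite{BGNV17} supplied as the $2$SUM-reporting subroutine. First I would recall the complexity of that subroutine on random input: under random read-only access to a random function $h:[m]\rightarrow[n]$, and with the pseudo-solution parameter satisfying $p=\Theta(n)$ as it does for a random list over a universe of size at least $n$, it reports all solutions in $\tilde{O}(n\sqrt{p/s}+op)=\tilde{O}(n^{1.5-0.5\delta}+op)$ time using $\tilde{O}(n^{\delta})$ space. In the notation of the reduction this is exactly $f(\delta)=1.5-0.5\delta$. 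Substituting this $f$ into the reduction's bound $O(n^{1-\delta+f(\delta)}+n^2)$ yields time $\tilde{O}(n^{1-\delta+1.5-0.5\delta}+n^2)=\tilde{O}(n^{2.5-1.5\delta}+n^2)$ and space $\tilde{O}(n^{\delta})$, which is the claim.

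The only non-mechanical part of the accounting is verifying the two additive terms, which I would do exactly as in the proof of the reduction. After hashing into $[n^{1-\delta}]$ there are $O(n^{1-\delta})$ buckets; each triggers one call to the $2$SUM-reporting subroutine on $n$-element (hashed) arrays, for a total of $\tilde{O}(n^{1-\delta}\cdot n^{f(\delta)})=\tilde{O}(n^{2.5-1.5\delta})$. Its output is consumed $O(n^{\delta})$ tuples at a time by the paused-reporting process, and since every pair from the first two arrays falls into exactly one bucket the total amount of output handled across all buckets is $O(n^2)$; the $O(n^{2-\delta})$ residual $2$SUM instances on $O(n^{\delta})$ elements are each solved by the trivial linear-space routine in $O(n^{\delta})$ time, again $O(n^2)$ in total. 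Summing gives $\tilde{O}(n^{2.5-1.5\delta}+n^2)$, while the working set never exceeds the $\tilde{O}(n^{\delta})$ of one subroutine call plus the $O(n^{\delta})$-sized bucket being processed. The algorithm is Monte Carlo only because the Bansal et al.\ subroutine may err; every piece of our own reduction machinery --- the hashing, the removing-duplicate-sums process, the paused-reporting process, and the small-space-of-fixed-group observation --- is correct deterministically.

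The step I expect to be the main obstacle is justifying that the hypothesis of the reduction is genuinely met, i.e.\ that each $2$SUM-reporting call it makes really runs in $\tilde{O}(n^{f(\delta)})$ time and $\tilde{O}(n^{\delta})$ space. Two points need care. The minor one is that the random read-only access to random bits required by the Bansal et al.\ subroutine must be available throughout; this is fine because the reduction uses the subroutine only as a black box. The delicate one is that the arrays $h(A_1),h(A_2)$ handed to the subroutine live in the reduced universe $[n^{1-\delta}]$ rather than a large one, so their repetition structure is heavier than that of a random list and a naive estimate of the pseudo-solution parameter would be as large as $\Theta(n^{1+\delta})$; one must argue that this clustering is precisely what the removing-duplicate-sums and small-space-of-fixed-group machinery already accounts for --- in particular that the fixed, size-one last group stays of size $O(n^{\delta})$ no matter how clustered the input is --- so that the extra repeated pairs only enlarge the total output (kept at $O(n^2)$ by paused reporting) and do not push any single subroutine call past $\tilde{O}(n^{f(\delta)})$. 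Making that bookkeeping watertight, rather than the arithmetic, is the real content of the proof.
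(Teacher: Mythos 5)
Your proposal is correct and follows exactly the paper's route: the corollary is obtained by instantiating the $3$SUM-to-$2$SUM-reporting reduction with the reporting variant of the Bansal et al.~\cite{BGNV17} List Disjointness algorithm, for which $p=\Theta(n)$ on random input gives $f(\delta)=1.5-0.5\delta$, whence $O(n^{1-\delta+f(\delta)}+n^2)=\tilde{O}(n^{2.5-1.5\delta}+n^2)$. The subtlety you flag --- that the subroutine is actually invoked on hashed arrays over the universe $[n^{1-\delta}]$, whose pseudo-solution count is $\Theta(n^{1+\delta})$ rather than $\Theta(n)$ --- is a real one, but the paper does not resolve it either (it simply asserts $p=\Theta(n)$ and substitutes), so on that point your write-up is, if anything, more explicit than the source.
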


\begin{corollary}
There is a Monte Carlo randomized algorithm that solves $4$SUM, $6$SUM and $8$SUM on random input using $\tilde{O}(n^{\delta})$ space and $\tilde{O}(n^{3.5-1.5\delta}+n^2)$, $\tilde{O}(n^{5.5-2.5\delta}+n^4)$ and $\tilde{O}(n^{7.5-3.5\delta}+n^{6-\delta})$ time, respectively, under the assumption that there is a random read-only access to a random function $h : [m] \rightarrow [n]$.
\end{corollary}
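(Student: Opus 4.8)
The plan is to instantiate the preceding theorem (the generic reduction that solves $4$SUM, $6$SUM and $8$SUM by invoking a black‑box $2$SUM reporting routine running in $O(n^{f(\delta)})$ time and $\tilde O(n^{\delta})$ space) with the concrete $2$SUM reporting algorithm obtained from Bansal et al.~\cite{BGNV17}. First I would recall from the discussion above that, on random input, the List Distinctness / $2$SUM algorithm of~\cite{BGNV17} can be adapted to report all solutions in $\tilde O(n\sqrt{p/s}+op)$ time using $O(s\log n)$ space under a random read‑only oracle $h:[m]\rightarrow[n]$, and that for a random input the pseudo‑solution count satisfies $p=\Theta(n)$; setting $s=n^{\delta}$ this yields $f(\delta)=1.5-0.5\delta$, exactly as recorded earlier.

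The remaining step is a one‑line substitution of $f(\delta)=1.5-0.5\delta$ into the three bounds of the preceding theorem: for $4$SUM, $n^{2-\delta+f(\delta)}+n^{2}=n^{3.5-1.5\delta}+n^{2}$; for $6$SUM, $n^{4-2\delta+f(\delta)}+n^{4}=n^{5.5-2.5\delta}+n^{4}$; and for $8$SUM, $n^{6-3\delta+f(\delta)}+n^{6-\delta}=n^{7.5-3.5\delta}+n^{6-\delta}$. The space bound $\tilde O(n^{\delta})$ is inherited verbatim, the algorithm is Monte Carlo because the $2$SUM subroutine of~\cite{BGNV17} is (it relies on the random‑bit oracle), and the output‑size terms collapse to $O(n^{2})$, $O(n^{4})$ and $O(n^{6-\delta})$ respectively, since every original pair is reported for exactly one value of the outer hash target — precisely the accounting already done in the proof of the preceding theorem, now combined with the "paused reporting" technique of Section~\ref{sec:self_reduction} so that each $2$SUM call only hands back an $O(n^{\delta})$‑sized batch at a time.

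The one point deserving care — and the only place where "random input" is used beyond the bound $p=\Theta(n)$ on the top‑level arrays — is to check that the $2$SUM reporting sub‑instances actually produced by the reduction also have $p=\Theta(n)$, so that the $\tilde O(n\sqrt{p/s})$ estimate specializes to $n^{1.5-0.5\delta}$ inside each call rather than degrading. For a random input this holds because each group fed to a $2$SUM instance consists of (hashed copies of) the original random arrays, whose values, and hence hash values (the hash range being at least $n$), are distinct up to $O(n)$ collision pairs; together with the "small space of fixed group" observation of Section~\ref{sec:self_reduction}, which keeps the last, fixed‑target group at $O(n^{\delta})$ elements, this bounds both the per‑call time and the cumulative reported output within the claimed budget. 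Everything else is the routine bookkeeping inherited from the proofs of the preceding theorems.
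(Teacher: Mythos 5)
Your proposal is correct and matches the paper's route exactly: the paper derives this corollary by plugging $f(\delta)=1.5-0.5\delta$ (from the Bansal et al.\ $2$SUM reporting algorithm with $p=\Theta(n)$ on random input) into the bounds of the preceding theorem, which is precisely your one-line substitution, and your arithmetic for all three exponents checks out. Your added remark verifying that the $2$SUM sub-instances retain $p=\Theta(n)$ is a reasonable extra care point that the paper leaves implicit, but it does not change the argument.
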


Notice that these last results are better than those achieved by Bansal et al.~\cite{BGNV17}.
One very interesting implication of Corollary~\ref{cor:3sum_by_2sum} is that we can solve $3$SUM on random input in $\tilde{O}(n^2)$ time using just $\tilde{O}(n^{1/3})$ space. This is better than the best currently known solution of $3$SUM that runs in $O(n^2)$ time using $O(n^{1/2})$ space.

\bibliographystyle{plainurl} \bibliography{ms}

\begin{thebibliography}{10}

\bibitem{AL13}
Amir Abboud and Kevin Lewi.
\newblock Exact weight subgraphs and the k-sum conjecture.
\newblock In {\em International Colloquium on Automata, Languages and
  Programming, {ICALP} 2013}, pages 1--12, 2013.

\bibitem{AW14}
Amir Abboud and Virginia~Vassilevska Williams.
\newblock Popular conjectures imply strong lower bounds for dynamic problems.
\newblock In {\em Foundations of Computer Science, {FOCS} 2014}, pages
  434--443, 2014.

\bibitem{AWW14}
Amir Abboud, Virginia~Vassilevska Williams, and Oren Weimann.
\newblock Consequences of faster alignment of sequences.
\newblock In {\em International Colloquium on Automata, Languages and
  Programming, {ICALP} 2014}, pages 39--51, 2014.

\bibitem{AWY15}
Amir Abboud, Virginia~Vassilevska Williams, and Huacheng Yu.
\newblock Matching triangles and basing hardness on an extremely popular
  conjecture.
\newblock In {\em Symposium on Theory of Computing, {STOC} 2015}, pages 41--50,
  2015.

\bibitem{ACLL14}
Amihood Amir, Timothy~M. Chan, Moshe Lewenstein, and Noa Lewenstein.
\newblock On hardness of jumbled indexing.
\newblock In {\em International Colloquium on Automata, Languages and
  Programming, {ICALP} 2014}, pages 114--125, 2014.

\bibitem{AKKM13}
Per Austrin, Petteri Kaski, Mikko Koivisto, and Jussi M{\"{a}}{\"{a}}tt{\"{a}}.
\newblock Space-time tradeoffs for subset sum: An improved worst case
  algorithm.
\newblock In {\em International Colloquium on Automata, Languages, and
  Programming, {ICALP} 2013}, pages 45--56, 2013.

\bibitem{BGNV17}
Nikhil Bansal, Shashwat Garg, Jesper Nederlof, and Nikhil Vyas.
\newblock Faster space-efficient algorithms for subset sum and k-sum.
\newblock In {\em Symposium on Theory of Computing, {STOC} 2017}, pages
  198--209, 2017.

\bibitem{BDP05}
Ilya Baran, Erik~D. Demaine, and Mihai Patrascu.
\newblock Subquadratic algorithms for {3SUM}.
\newblock In {\em Workshop on Algorithms and Data Structures, {WADS} 2005},
  pages 409--421, 2005.

\bibitem{BHP99}
Gill Barequet and Sariel Har{-}Peled.
\newblock Polygon-containment and translational min-hausdorff-distance between
  segment sets are 3sum-hard.
\newblock In {\em Symposium on Discrete Algorithms, {SODA} 1999}, pages
  862--863, 1999.

\bibitem{BCJ11}
Anja Becker, Jean{-}S{\'{e}}bastien Coron, and Antoine Joux.
\newblock Improved generic algorithms for hard knapsacks.
\newblock In {\em Theory and Applications of Cryptographic Techniques,
  {EUROCRYPT} 2011}, pages 364--385, 2011.

\bibitem{Dietzfelbinger96}
Martin Dietzfelbinger.
\newblock Universal hashing and k-wise independent random variables via integer
  arithmetic without primes.
\newblock In {\em Symposium on Theoretical Aspects of Computer Science, {STACS}
  1996}, pages 569--580, 1996.

\bibitem{DDKS12}
Itai Dinur, Orr Dunkelman, Nathan Keller, and Adi Shamir.
\newblock Efficient dissection of composite problems, with applications to
  cryptanalysis, knapsacks, and combinatorial search problems.
\newblock In {\em Cryptology Conference, {CRYPTO} 2012}, pages 719--740, 2012.

\bibitem{GO95}
Anka Gajentaan and Mark~H. Overmars.
\newblock On a class of ${O}(n^2)$ problems in computational geometry.
\newblock {\em Comput. Geom.}, 5:165--185, 1995.

\bibitem{GKLP16}
Isaac Goldstein, Tsvi Kopelowitz, Moshe Lewenstein, and Ely Porat.
\newblock How hard is it to find (honest) witnesses?
\newblock In {\em European Symposium on Algorithms, {ESA} 2016}, pages
  45:1--45:16, 2016.

\bibitem{GKLP17}
Isaac Goldstein, Tsvi Kopelowitz, Moshe Lewenstein, and Ely Porat.
\newblock Conditional lower bounds for space/time tradeoffs.
\newblock In {\em Algorithms and Data Structures Symposium, {WADS} 2017}, pages
  421--436, 2017.

\bibitem{GLP17}
Isaac Goldstein, Moshe Lewenstein, and Ely Porat.
\newblock Orthogonal vectors indexing.
\newblock In {\em International Symposium on Algorithms and Computation,
  {ISAAC} 2017}, pages 40:1--40:12, 2017.

\bibitem{HGJ10}
Nick Howgrave{-}Graham and Antoine Joux.
\newblock New generic algorithms for hard knapsacks.
\newblock In {\em Theory and Applications of Cryptographic Techniques,
  {EUROCRYPT} 2010}, pages 235--256, 2010.

\bibitem{Karp72}
Richard~M. Karp.
\newblock Reducibility among combinatorial problems.
\newblock In {\em Symposium on the Complexity of Computer Computations}, pages
  85--103, 1972.

\bibitem{KPP16}
Tsvi Kopelowitz, Seth Pettie, and Ely Porat.
\newblock Higher lower bounds from the {3SUM} conjecture.
\newblock In {\em Symposium on Discrete Algorithms, {SODA} 2016}, pages
  1272--1287, 2016.

\bibitem{LWWW16}
Andrea Lincoln, Virginia~Vassilevska Williams, Joshua~R. Wang, and R.~Ryan
  Williams.
\newblock Deterministic time-space trade-offs for k-sum.
\newblock In {\em International Colloquium on Automata, Languages, and
  Programming, {ICALP} 2016}, pages 58:1--58:14, 2016.

\bibitem{Patrascu10}
Mihai Patrascu.
\newblock Towards polynomial lower bounds for dynamic problems.
\newblock In {\em Symposium on Theory of Computing, {STOC} 2010}, pages
  603--610, 2010.

\bibitem{PW10}
Mihai Patrascu and Ryan Williams.
\newblock On the possibility of faster {SAT} algorithms.
\newblock In {\em Symposium on Discrete Algorithms, {SODA} 2010}, pages
  1065--1075, 2010.

\bibitem{SS79}
Richard Schroeppel and Adi Shamir.
\newblock A {T} s{\^{}}2 = o(2{\^{}}n) time/space tradeoff for certain
  np-complete problems.
\newblock In {\em Foundations of Computer Science, {FOCS} 1979}, pages
  328--336, 1979.

\bibitem{Wang14}
Joshua~R. Wang.
\newblock Space-efficient randomized algorithms for {K-SUM}.
\newblock In {\em European Symposium on Algorithms, {ESA} 2014}, pages
  810--829, 2014.

\end{thebibliography}

\newpage
\appendix
\section*{Appendix}

\section{Balancing an Almost-Balanced Hash Function}\label{apx:balanced}
In our paper we use almost-balanced hash functions intensively. To make the presentation clearer we assumed that applying an almost-balanced hash function $h$ with range $[n^{m-\delta}]$ to $n^m$ possible $m$-tuples of some $m$ arrays with $n$ elements is expected to map $O(n^{\delta})$ elements to each value of $h$. However, as $h$ is not guaranteed to be a balanced hash function but rather an almost-balanced hash function, we can have $O(n^{m-\delta})$ elements that are mapped to overflowed values of $h$. In order for our Las Vegas algorithms throughout this paper to work we need to handle this overvalued values somehow even though we do not have enough space for saving all these elements. The idea to overcome these overflowed values is to eliminate them using a modified version of the technique by Wang~\cite{Wang14}. A useful observation by Wang~\cite{Wang14} implies, informally speaking, that if we apply an almost-balanced hash function $h$ to $n$ elements and its range is roughly $\sqrt{n}$, then there is a good probability that no value of $h$ has more than $c\sqrt{n}$ elements that are mapped to it, for some constant $c$ (see Corollary 2 in~\cite{Wang14}). Wang used this observation in order to map $n$ elements to sets of $O(n^{\delta})$ elements. This was done by creating a sequence of hash functions $(h_1,h_2,...,h_{\ell})$ such that each $h_i$ reduces the size of the set of elements to a square root of its size after applying the previous hash functions in the sequence. Specifically, if we begin with a set of $n$ elements the range of $h_1$ is $n^{1/2}$, the range of $h_2$ is $n^{1/4}$ and so on. The sequence ends when the number of elements in the each set is finally reduced to $O(n^{\delta})$ (see the full details in the proof of Theorem 2 in~\cite{Wang14}). Each of the hash functions in the sequence is checked to verify that it creates balanced sets (we cannot create the sets as we do not have enough space, instead we just check their size). If not, another hash function is picked and checked. Because of the observation, the number of hash functions we have to pick before finding a balanced one is expected to be three, if we choose $c$ to be approximately $k$. We emphasize that the next hash function in the sequence is chosen after fixing a specific value for the previous ones. Therefore, a different sequence is picked for every possible choice of values for the different hash functions in the sequence.

We use the same idea for balancing our almost-balanced hash functions. Whenever we begin with a set of $n^m$ elements (representing all $m$-tuples from $m$ arrays of $n$ elements), we reduce it to sets of size $O(n^{\delta})$ by using a sequence of hash functions $(h_1,h_2,...,h_{\ell})$ that each of them reduces the size of the sets to a square root of their size after applying the previous hash functions in the sequence. As in Wang's method, we check that each hash function creates balanced sets (using the same constant as the the one used by Wang and just counting the sizes of the implied sets) given a specific choice of values to the previous hash functions in the sequence and after these hash functions were already approved as being balanced.

We modify the method of Wang in two specific points:

(i) In Wang's solution, given an instance of $k$SUM with $k$ arrays he partition the arrays into $k$ groups that each of them contains a single array with $n$ elements. Therefore, he fixed the values of each hash function in the sequence for all groups together. In our general self-reduction, we have several groups of $m$ arrays. Following our "\emph{small space of fixed group}" observation (see Section~\ref{sec:self_reduction}), we just need to balance one group, while for all other groups the mapping of the $m$-tuples sums by the hash function can be unbalanced. Therefore, we have to fix the values of the hash functions for just \textbf{one} group and verify the sets are balanced within that group.

(ii) The second modification is due to the fact that we have a group of $m$ arrays and not just one array. Therefore, there can be duplicate sums, which we want to eliminate as explained in Section~\ref{sec:self_reduction}. Consequently, we do not want the number of duplicate sums to be included in our counting process. To handle this, we do not verify that each hash function implies balanced sets immediately after fixing the previous ones in the sequence and their values, but rather we choose all the hash functions in the sequence and then check if we get balanced sets of size $O(n^{\delta})$ for a specific choice of values to all hash functions. The crucial point is that the number of hash functions we have to pick for each position in the sequence is expected to be constant and the length of the sequence depends only on $k$ and $\delta$ which are also constants. Therefore, the number of sequences we have to check is expected to be $O(1)$. Checking each sequence can be done in $O(n^m)$ time by iterating over all $m$-tuples. The number of possible values to all hash functions in the sequence is $O(n^{m-\delta})$. Therefore, the total time for finding all the balanced hash functions we need is $O(n^{2m-\delta})$.

In the case of our self-reduction (see Section~\ref{sec:self_reduction}) the process of finding a sequence of hash function that balance one group is done before the recursion. Consequently, the running time is added outside the parenthesis of our recursion formula. Therefore, it does not affect the total running time of the self-reduction process following our choices of values for $m$. We also note that the same method can be applied to all other results in the paper that use almost-balanced hash functions without affecting the running time (following our analysis).

Using this method, we emphasize that in our Las Vegas algorithms each $h$ we pick is in fact a sequence of hash functions. This $h$ does not produce a single value but rather a vector of values (with constant length). In the recursion process of our self-reduction we regard the value of the hash function as this vector and apply the next hash function to each of its elements. In the base case of $2$SUM, we sort one array (or parts of this array) according to each position in the vector after sorting the previous ones (this is in fact a lexicographic order) and then use this sorting to quickly find a match to elements from the other array. To simplify this process, we can also map vectors to single values by picking $M$ which is the largest value any hash function in the sequence can get and multiply the first element in the vector by $M$, the second element by $M^2$ and so on, and then sum all these values. This simplified method can also be useful for a black box algorithm we may use like in Section~\ref{sec:3sum_random_input}. Finally, we also note that to get just a Monte Carlo solution we do not need this method of balancing, as the probability that an element within a solution to $k$SUM is in an overflowed value is small (this can be used, for example, in Section~\ref{sec:3sum_random_input}).

\end{document}